\documentclass[12pt]{article}

\usepackage{epsfig}

\usepackage{amssymb}
\usepackage{amsfonts}

\usepackage{color}
 
%
 \oddsidemargin = -0.5cm
 \evensidemargin = 0cm
 \textwidth = 17.5cm
   \setlength{\headheight}{-10pt}
 \setlength{\headsep}{-10pt}
                                         \setlength{\textheight}{640pt}






%
%

\def\be{\begin{equation}}
\def\ee{\end{equation}}
\def\ba{\begin{array}{c}}
\def\ea{\end{array}}

\def\ben{$$}
\def\een{$$}

\newcommand{\bea}{\begin{eqnarray}}
\newcommand{\eea}{\end{eqnarray}}

\newcommand{\kt}{\rangle}
\newcommand{\br}{\langle}

\newtheorem{thm}{Theorem}

\newtheorem{prop}[thm]{Proposition}

\newtheorem{conj}[thm]{Conjecture}
\newenvironment{proof}{\noindent
 {\bf Proof.}}{\hfill$\square$\vspace{3mm}\endtrivlist}

\begin{document}

\begin{center}

{\Large

Confluences of exceptional points
and a systematic
classification of
quantum catastrophes

}

\vspace{10mm}

\textbf{Miloslav Znojil}

\vspace{0.2cm}

The Czech Academy of Sciences, Nuclear Physics Institute,

 Hlavn\'{\i} 130,
250 68 \v{R}e\v{z}, Czech Republic

\vspace{0.2cm}

 and

\vspace{0.2cm}

Department of Physics, Faculty of Science, University of Hradec
Kr\'{a}lov\'{e},

Rokitansk\'{e}ho 62, 50003 Hradec Kr\'{a}lov\'{e},
 Czech Republic

\vspace{0.2cm}

 and

\vspace{0.2cm}

Institute of System Science, Durban University of Technology,
P. O. Box 1334, Durban, 4000, South Africa

\vspace{0.2cm}

{e-mail: znojil@ujf.cas.cz}

\end{center}

\vspace{15mm}

\newpage

\section*{Abstract}

In the problem of classification of the
parameter-controlled quantum phase transitions,
attention is turned from the
conventional manipulations
with the energy-level mergers
at exceptional points
to the control of
mergers of the exceptional points
themselves.
What is obtained is an exhaustive
classification
which characterizes
every phase transition by the algebraic
and geometric multiplicity
of the underlying confluent exceptional point.
Typical qualitative characteristics of
non-equivalent phase transitions are illustrated
via a few elementary toy models.

\section*{Keywords}

.

quasi-Hermitian quantum Hamiltonians;

loss of observability processes;

exceptional point multiplicities;

phase transition classification.

\newpage

\section{Introduction}

 \noindent
In the conventional descriptions of unitary evolution
of quantum systems in
Schr\"{o}dinger picture (SP, \cite{Messiah})
the information about dynamics is all carried by the
Hamiltonian. The predictions of experiments are all based on
the
solution of
Schr\"{o}dinger equation
 \be
 {\rm i}\frac{d}{d t}|\psi(t)\kt=H\,
 |\psi(t)\kt\, ,\ \ \
  \ \ \ \ |\psi(t)\kt \in {\cal V}\,.
 \label{eqsch}
 \ee
A broader perception of the concept
of quantum dynamics
will be advocated here,
with emphasis upon
{\em qualitative\,} aspects of the role of parameters.

Among several sources of inspiration of our project the most
obvious one is the Thom's theory of
catastrophes \cite{Zeeman,Zeema}. As long as such a theory is
of a purely geometric nature \cite{Zeemanb}, it is applicable,
predominantly, to the classical dynamical systems. In this domain it
offers, first of all, a systematic classification of structures and
of the possible changes of structures of the long-term classical
equilibria \cite{Zeemanc}. The applicability and/or an immediate
transfer of these ideas to quantum systems are limited
\cite{ODell,ODellb,ODellc,catast,ODelld}.

According to
papers \cite{Heissb,Heiss,Heissc,hata,juta,kajuta,ceta},
new eligible directions of research emerged after
a turn of attention to the Kato's
concept of exceptional point
(EP, \cite{Kato}).
After a small change $g \to g^{(EP)}$ of
a real or complex parameter
in Hamiltonian $H=H(g)$,
such an operator ceases to be diagonalizable.
Hence,
the bifurcation of the Thom's classical equilibria
can find its genuine quantum analogue
in the passage of the parameter through its
real or complex
value $g^{(EP)}$.
The latter possibility is, in essence, also the key point of our
present paper. In our text we will try to develop the idea in a
certain more systematic and constructive manner.

A decisive key to the realizability of the project can be seen in
the Bender's and Boettcher's change of the traditional paradigms
\cite{BB}. Indeed, it was them who conjectured that the unitary
evolution could be, under certain conditions, realized and described
even when the generator $H$ of evolution of the wave function in
Schr\"{o}dinger Eq.~(\ref{eqsch}) becomes, in an apparent
contradiction to the well known Stone theorem \cite{Stone},
manifestly non-Hermitian. And precisely this change of paradigm (cf.
also the detailed outline of the resulting consistent quantum theory
of closed systems as reviewed, more than ten years ago, in papers
\cite{Carl,ali}) opened also the way towards the change of the
status of the notion of EPs from a strictly mathematical tool as
developed in the Kato's book to one of the most
important concepts in experimental physics -- see, e.g., paper
\cite{[j]} outlining the related
``roadmap for future studies and potential
applications''.

The core of our present message will lie in a combination of the
purposeful theoretical use of parameter-dependent non-Hermitian
Hamiltonians with a detailed analysis of some of the consequences in
the ambitious phenomenological context of description
and classification
of a broad
class of phenomena called quantum phase transitions \cite{[a]}.
Naturally, the
feasibility of our project will require a certain methodically
motivated narrowing of its scope.
Thus, in contrast to the more conventional perception of the quantum
phase transition phenomena as described in textbook \cite{[a]} (and,
typically, associated with the spontaneous symmetry breaking), our
present approach to the problem of phases will be slightly
different, more closely associated with the potential passage of the
quantum system in question strictly through its EP singularity.

In the context of such a theory
(rendered consistent by the non-Hermiticity of $H$)
we will only consider a number of elementary toy models, with
emphasis upon the quick, non-numerical solvability of the related
Schr\"{o}dinger Eq.~(\ref{eqsch}). It
is worth adding that in such a case (to be related here, for the
sake of simplicity, just to the spectral phase transitions) one of
the phases may (though need not) be ill-defined in a way depending
on the respective presence or absence of the complexification of the
spectrum near the EP (see, e.g., \cite{procA} for a few
illustrative examples of the latter, slightly less well known
possibility of having no complexification).

In the currently highly popular pragmatic context of the
phenomenological
applicabilty and of the
proposals and predictions of the results of experiments,
the unavoidable
methodical limitations of
our present considerations
using oversimplified toy models
will be even more visible and
restrictive. In this
respect the readers may
be recommended to fill the experiment-related gaps in our text by
following the currently existing and rich specialized literature (see,
e.g., the freshmost reviews of non-Hermitian physics
in \cite{Christodoulides,Carlbook}).

In
the latter frame we will only emphasize the central role played, in the
underlying mathematics and physics, by the ubiquitous \cite{Heiss}
Kato's notion of EPs.
In the language of mathematics we only intend to complement the
contemporary popular but rather formal reference to EPs in various
realistic models by a slightly more ambitious theoretical
interpretation of the EP concept referring to its non-equivalent
realizations.

\section{Unitarity-of-evolution constraint}

Among the existing applications
of qualitative considerations
to quantum dynamics we felt particularly addressed
by the
mathematical studies in which the
EP limits were of
order two (EP2).
In this scenario,
just some
two neighboring eigenvalues
$E_n(g)$ and $E_{n+1}(g)$ of $H(g)$
are assumed to
merge
and complexify
at
$g = g^{(EP)}= g^{(EP2)}$.

The latter studies were
often motivated by
the physics of systems exhibiting
a genuine quantum phase
transition \cite{Heissmod,issmod}.
According to
our most recent commentary \cite{confser},
most of these systems have been considered
``open'',
interacting with a
certain not too well specified ``environment''.
As a consequence,
the bound states remained unstable, with the
energies
which need not be kept real \cite{Nimrod}.
In such an open-system setup
a typical Hamiltonian $H(g)$ is non-Hermitian
so that its EP singularities of the $N-$th-order
may be complex,
$g= g^{(EPN)} \in \mathbb{C}$.
One can, nevertheless, hardly speak about fundamental theory
because the ``input'' information about the
open-system dynamics
(and, in particular, about the environment)
remains incomplete.

Our present attention will be restricted
to the closed systems
characterized by the unitarity of their
evolution.
One of the key technical consequences
is that
the postulate of unitarity
must be, due to the Stone theorem \cite{Stone},
necessarily
connected with the postulate of Hermiticity
of the Hamiltonian.

The way out of the apparent impasse
has only been discovered very recently.
It appeared sufficient to
replace the
conventional textbook SP paradigm
by its straightforward upgrade
which works with the {\em two\,}
non-equivalent Hermitian conjugations
and
which may be called
pseudo-Hermitian quantum mechanics (PHQM,
see its review \cite{ali}).

\subsection{Quantum observables in pseudo-Hermitian representation.}

In the PHQM SP approach
the
EP singularity may
mark a natural
boundary of the formal
acceptability of any candidate for quantum
Hamiltonian. The theory emphasizes that the mere
specification of the linear space ${\cal V}$
and the related knowledge of
the ket-vector solutions $|\psi(t)\kt \in {\cal V}$ of
Schr\"{o}dinger Eq.~(\ref{eqsch}) are insufficient.
What is considered equally important is the freedom
of the choice of the physical inner product between states.
This is equivalent to
the specification of a correct dual space
${\cal V}'$
of the linear functionals in ${\cal V}$.
Such a choice
is known to be ambiguous
(see, e.g., p. 246 in \cite{Messiah}).
Still, in contrast to the widespread beliefs,
this ambiguity may be useful, bringing several
immediate theoretical challenges as well
as practical benefits~\cite{arabky}.

Some of the subtler aspects
of the problem did not find their ultimate clarification
yet \cite{lotor}.
Nevertheless, under certain additional technical
assumptions
the apparent paradox
has already been resolved,
almost thirty years ago,
in review paper \cite{Geyer}.
The
ambiguity of the abstract theory,
i.e., the ambiguity of the choice of the correct physical Hilbert space
${\cal H}_{\rm (physical)}=[{\cal V},{\cal V}'_{\rm (physical)}]$
has been shown removable. It has been explained that
there exists a very natural
method of the necessary unique specification
of the correct
antilinear duality map
${\cal T}_{\rm (physical)}:\,{\cal V}\ \to \ {\cal V}'_{\rm (physical)}$.

In the literature
one still encounters a few
obstinate terminological misunderstandings.
One of their sources lies in the fact
that the
operator ${\cal T}_{\rm (physical)}$
of the correct Hermitian conjugation
need not necessarily have an easily obtainable realization
(cf., e.g., \cite{117,117i,117b,117c}).
The
reconstruction of the physical inner-product space
${\cal H}_{\rm (physical)}$
is,
therefore, most often postponed till the very end of
the calculations.
Temporarily, the correct physical space
is being replaced by its simplified,
user-friendlier alternative
${\cal H}_{\rm (auxiliary)}$.
In spite of being manifestly unphysical,
the key advantage of the latter choice lies
in the simplification of the conjugation.
Its
most straightforward form
${\cal T}_{\rm (auxiliary)}:\,{\cal V}\ \to \ {\cal V}'_{\rm (auxiliary)}$
is realized as the
action which transforms
the column-vector {\it alias\,} ket-vector
$|\psi\kt \in {\cal V}$
into its conventional ``Dirac's''
conjugate of textbooks, i.e.,
into
its
bra-vector partner
$\br \psi| \in {\cal V}'_{\rm (auxiliary)}$ which is
constructed
as a row-vector
composed of the complex-conjugate elements of
its partner $|\psi\kt$.

For the users of the PHQM SP theory it is sufficient to know that
the decisive simplification of it applications is
achieved
via a consequent
representation of all of the states in
${\cal H}_{(auxiliary)}$
rather than in ${\cal H}_{(physical)}$.
The {\em only\,} space in which
one performs calculations is
${\cal H}_{(auxiliary)}$.
Hence, the use of the Dirac's
bra-ket notation conventions
cannot lead to any contradictions.
Under this convention
it is easy to evaluate
any correct inner product $(\psi_1,\psi_2)_{(physical)}$
in ${\cal H}_{(physical)}$
in
terms of its unphysical partner in
${\cal H}_{(auxiliary)}$ because
we are allowed to abbreviate
$(\psi_1,\psi_2)_{(auxiliary)}=\br \psi_1|\psi_2\kt$.
The representation
of the amended inner product
$(\psi_1,\psi_2)_{(physical)}$
remains
based on the definition
$(\psi_1,\psi_2)_{(physical)}=\br \psi_1|\Theta|\psi_2\kt$
in which the new symbol $\Theta$ (called Hilbert space metric)
can in fact
carry a nontrivial part of the information about dynamics.

The picture of reality remains internally consistent.
Whenever one considers a parameter-dependent (and, say, analytic)
family of SP Hamiltonians $H(g)$ admitting
an EP singularity  at a (real or complex) EP value $g=g^{(EP)}$,
one has to localize the domain ${\cal D}_{(physical)}$
of admissible parameter(s) $g$ at which the spectrum
remains real and discrete, i.e.,
unitarity- and closed-system-compatible.
The theory is completed when one specifies also the
Hilbert-space metric
which is, in general,
$g-$dependent,
$\Theta=\Theta(g)$.

The necessary {\em mathematical\,} properties of
the metric operator
can be found thoroughly discussed
in \cite{ali,lotor,Geyer} and in \cite{book}.
Among these properties a key role is played by the
ambiguity of the assignment of the metric
to a preselected SP Hamiltonian
symbolized, whenever needed, by the introduction of
another formal parameter
$c\,$ in $\Theta=\Theta(g,c)$.

Irrespectively of the latter ambiguity,
{\em any\,} Hamiltonian-compatible metric
will necessarily {\em cease to exist\,}
in
the EP limit \cite{SIGMAdva}.
In parallel, our operator $H(g)$
will
cease to be diagonalizable and it will
lose its status of an acceptable Hamiltonian
in {\em the same\,} limit of $g\to g^{(EP)}$.

\subsection{Quantum phase transitions at exceptional points.}

Several well known quantum
effects can be connected
with some EP singularities.
The limit of $g\to g^{(EP)}$
implies the
end (or at least interruption) of the observability
of the quantum system.
In such a limit, typically (cf., e.g.,
the schematic model in \cite{ptsqw}),
at least one pair
of energy levels merges and complexifies, i.e.,
the system ceases to be unitary.
Besides the schematic models
there also exist multiple entirely realistic samples
of such a phenomenon. The best known ones
are encountered in relativistic quantum mechanics.
The emergence of the singularity
requires there an abrupt redefinition of the Hamiltonian
in which one has to incorporate
the new, ``unfrozen'' dynamical degrees of freedom.

The necessary matching
of the old (= ``before EP'') and
new (= ``after EP'') dynamics (i.e., between the
respective {\it ad hoc\,}
Hamiltonians) is usually performed on a
pragmatic, effective-Hamiltonian basis.
The realization of the transition becomes less counterintuitive
when the EP-caused loss of the observability happens to involve
more than two levels. One of the most characteristic illustrative
examples is the well known Landau's \cite{Landau}
strongly singular harmonic oscillator with potential
 \be
 V^{(HO)}(x)=x^2-g/x^2\,.
 \label{land}
 \ee
The system collapses, in suitable
units, at $g=1/4$. One of the ways towards the resolution of the
puzzle has been described in Ref.~\cite{ptho}. At $x=0$ we
regularized the potential in the spirit of pseudo-Hermitian
quantum theory. The collapse of the spectrum then
acquired an immediate EP-related form. With the growth of
attraction $g$
the levels were found to merge and to
form, subsequently, the complex conjugate pairs
(cf. Fig.~\ref{pictho}).

\begin{figure}[h]                    
\begin{center}                         
\epsfig{file=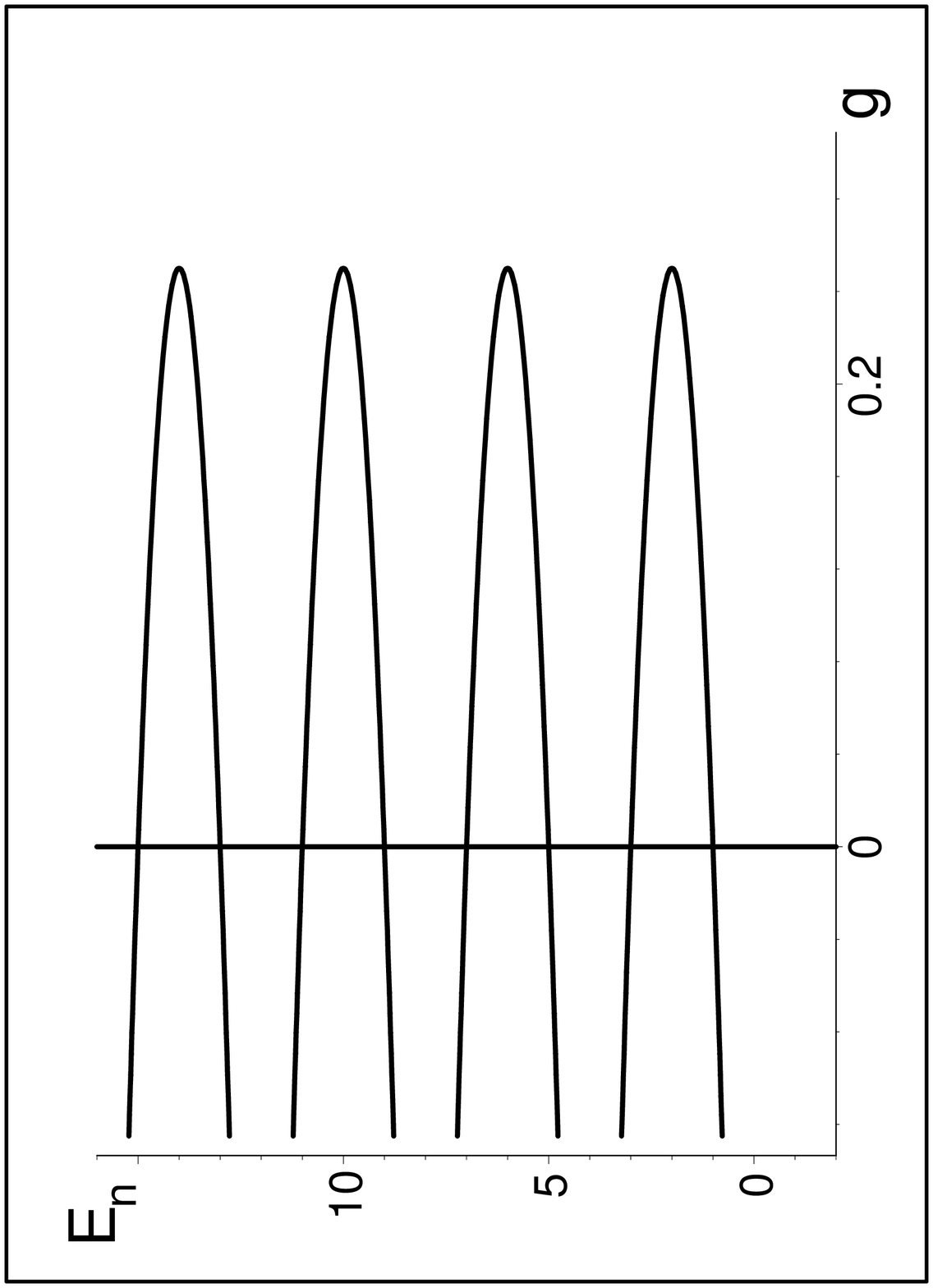,angle=270,width=0.4\textwidth}
\end{center}    
\vspace{2mm} \caption{The low-lying part of the
bound-state spectrum of the exactly solvable
quasi-Hermitian
harmonic oscillator
of Eq.~(\ref{land}).
The model possesses
the confluent exceptional point tractable as the loss-of-unitarity
quantum phase transition at $g=1/4$.
 \label{pictho}
 }
\end{figure}

In our recent follow-up paper \cite{scirep} a
closer connection has been
established between the harmonic-oscillator
physics of collapse and the
mathematics of its exceptional points. Near $g=1/4$,
in particular,
explicit form has been found of all of the admissible
duality maps ${\cal T}$ defining all of the available physical
Hilbert spaces and metrics $\Theta(g,c)$.
With an auxiliary regularization shift
$\varepsilon>0$ of coordinates $x \to x-{\rm i} \varepsilon$
in (\ref{land})
(which
does not influence the results and can be arbitrary) the HO model
has been finally shown to support the desirable
degeneracies of our present interest,
 \be
 \lim_{{{g}}\to {{g}}^{(EP2)}_{m}}
 E_{2m}({{g}})=
 \lim_{{{g}}\to {{g}}^{(EP2)}_{m}}
 E_{2m+1}({{g}})=
 E^{(EP2)}_m<E^{(EP2)}_{m+1}
 \,,\ \ \ \ m = 0,1,\ldots\,.
 \label{dekq}
 \ee
From the point of view of mathematics it is utterly nontrivial that
all of these exceptional-point
couplings {\em coincide},
 \be
 {g}^{(EP2)}_{0}=
  {g}^{(EP2)}_{1}=
{g}^{(EP2)}_{2}= \ldots=
 {g}^{(EP)}_{(confluent)} \,
 \label{negrgr}
 \ee
(see, once more, Fig.~\ref{pictho}).
By far the most interesting physics behind
the latter ``degeneracy of degeneracies''
occurs in a small vicinity of the confluent EP
value. At the slightly weaker couplings ${g} < 1/4$ the {\em whole\,}
spectrum is non-degenerate and real, i.e., the system is
unitary. Whenever we choose just a slightly
stronger attraction ${g} > 1/4$,
the reality of {all of the individual energy
levels} gets simultaneously lost.

For a complementary, qualitatively different introductory
illustration let us
recall
the exactly solvable square-well model $H^{(SQW)}({g})$
of Ref.~\cite{ptsqw}.
The process of the loss of the observability
starts there at the two
lowermost bound states.
With the growth of $g$ (i.e.,
of the non-Hermiticity),
the model produces an infinite
sequence
of the energy mergers of order two (EP2).
They are
real, well separated and
ordered as follows,
 \be
 0<{g}^{(EP2-SQW)}_{0}< {g}^{(EP2-SQW)}_{1}<
 {g}^{(EP2-SQW)}_{2}< \ldots\,.
 \label{grgr}
 \ee
This is a typical, generic scenario.
In many other quantum models with a variable parameter
(cf., e.g., \cite{ptsqwb,ptsqwbc,ptsqwc}),
the EP-caused
quantum-phase-transition phenomenon just exclusively
involves the pairs of the merging levels.
Obviously, the gradually emerging
EP2s are characterized by their nonzero distance
from the parametric domain ${\cal D}_{(physical)}$
so that they remain
phenomenologically irrelevant.
At the same time, their confluence as sampled by
Fig.~\ref{pictho} has always been considered
improbable and next to impossible to
achieve in the laboratory \cite{Heiss}.

A decisive return to optimism as sampled by the theoretical results
of Ref.~\cite{procA} is only of a very recent date.
One has to emphasize that
also in the parallel context of the possible experimental
simulations the recent progress is quick. In \cite{[b]}, for
example, the authors argued that the models as sampled, up to some
similarity transformations, in Ref.~\cite{procA} could really find an
immediate
experimental realization. In detail these authors have shown that
the matrices controlling the evolution of the higher-order field
moments of certain two-mode systems could be realizable in the
zero-dimensional bosonic anti-${\cal PT}$-symmetric dimers.

\subsection{The mergers of the mergers.}

In our present paper we will
ignore the isolated EP2 mergers
of a single pair of energies
 $$
 \lim_{{{g}}\to {{g}}^{(EP2)}}
 [E_{j}({{g}})-
  E_{j+1}({{g}})]=0
 $$
occurring at a single
excitation $j$ as not too interesting.
Our search will be
redirected to the
models exhibiting certain
``mergers of the mergers''.
More explicitly, we will
introduce at least one other variable parameter
(say, $p$)  and we
will search for the confluence of the
EP2s themselves, i.e., of ${g}^{(EP2)}_{a}(p)$
with ${g}^{(EP2)}_{b}(p)$, etc.
Thus, in an ``upgraded'' dynamical scenario
we will search, say,
for four-level merger EP4=EP2$\oplus$EP2 such that
  $$
  \lim_{{{p}}\to {{p}}^{(EP4)}}
 [{g}^{(EP2)}_{a}(p)- {g}^{(EP2)}_{b}(p)]=0
 $$
etc.

In the framework of such a project,
both of our previous illustrative models
proved unsatisfactory.
In the former, harmonic-oscillator case,
the ``merger of all mergers'' did occur but it
remained rigid, parameter-independent, i.e.,
not usable for any active control of dynamics.
In the other, SQW model,
what remained rigid was
the separation of the
exceptional points.
The absence of any auxiliary parameter $p$
did not allow us to convert
at least some of the sharp inequalities
into equal signs
in Eq.~(\ref{grgr}).

An encouraging partial resolution of
the puzzle only came with paper~\cite{procA}.
We managed to match there
the evolution
``before EP'' with
the evolution
``after EP''.
The goal has been realized
via an extreme and
brutal  maximal fine-tuning procedure.
The graduality formula (\ref{grgr})
has been made parameter-dependent, i.e., in our
present notation, $p-$dependent.
Next,
the $p-$supported limiting-confluence
conversion of the sharp inequality signs ``$<$''
into equal signs ``$=$'' in ~(\ref{grgr})
has been
imposed upon {\em all\,} of the separate EP2s.
The EPN degeneracy involved {all} of the states
(the number $N$ of which was chosen even).
The ``gradual''
pattern of Eq.~(\ref{grgr})
has been replaced by its
``confluent'' predecessor (\ref{negrgr}).

In the models of Ref.~\cite{procA}
where $N={\rm dim\,}\,H^{(before\, EP)}(g)
={\rm dim\,}\,H^{(after\, EP)}(g)$,
the construction
implied the complete
degeneracy of the energies,
 \be
 \lim_{g \to g^{(EPN)}}\,E_n(g) = \eta\,,
 \ \ \ \  n = 0, 1, \ldots, N - 1\,.
 \label{conflen}
 \ee
The
phase-transition-mediating
Hamiltonians
acquired, at the matching EP instant, the {\em same},
strongly fine-tuned
canonical
form of a single $N$ by $N$ non-diagonal Jordan-block matrix,
 \be
 \lim_{g \to g^{(EP)}}\,H^{(before/after\, EP)}(g)
 \sim
 J^{({N})}(\eta)=\left [\begin {array}{ccccc}
    \eta&1&0&\ldots&0
 \\{}0&\eta&1&\ddots&\vdots
 \\{}0&0&\eta&\ddots&0
 \\{}\vdots&\ddots&\ddots&\ddots&1
 \\{}0&\ldots&0&0&\eta
 \end {array}\right ] \,.
 \label{hisset}
 \ee
The explicit
construction of a
genuine quantum
energy-level-degeneracy
catastrophe
proved successful and
involved {all of the levels} in the spectrum.

In technical terms,
the feasibility of the construction
reflected the finite-dimensional nature of
Hamiltonians
$H^{(before\, EP)}(g)$ and $H^{(after\, EP)}(g)$.
Although the mechanisms causing the collapse
remained unchanged, the specific simultaneous
EPN-based phase-transition
effect (\ref{conflen}) itself has been
rendered possible.
Hamiltonians
$H^{(before\, EP)}$ and $H^{(after\, EP)}$
were connected and matched
in a strictly
continuous and both-sided
``fundamental-Hamiltonian''
manner.

In the early applications
of the non-Hermitian degeneracies (\ref{hisset}),
many of them
retained their
pragmatic effective-operator open-system physical background
admitting a virtually arbitrary complex $\eta$.
Only in a small minority of the closed-system models
with strictly real spectra
the authors
emphasized their
dynamically complete description as well as their
fundamental-theory
character.

In the latter context of our present exclusive interest,
multiple further new questions emerged.
Some of them will be re-opened and answered in what follows.

\section{Results}

Our present project is aimed at the search for new forms of
manipulation and control
of important qualitative features of quantum dynamics.
Our main
result
can be characterized
as a proposal of an EP-based quantum alternative
to the classical Thom's catastrophe theory.
The essence
of such a classification
concerning the quantum phase transitions
will lie, in its present form, in
the control of
the EP2-related singularities
and, in particular, in the control of their
confluences
and/or restructuralizations.

\subsection{Purpose: Unitary access to EPs in closed quantum systems.}

In the literature,
many authors
(cf., e.g.,
Trefethen and Embree \cite{Trefethen} or
Krej\v{c}i\v{r}\'{\i}k et al \cite{Viola})
studied the PHQM-related
quantum systems far from their
EP singularities. For this reason they
did not need to distinguish
too carefully between the open
(i.e., intrinsically non-unitary) and closed
(i.e., intrinsically unitary) quantum systems.
As a consequence,
several
interpretations of their results
happened to be unclear or even,
involuntarily, misleading.
Typically, whenever they
correctly identified
``unexpectedly wild''
reaction to ``small'' perturbations \cite{Viola},
they did not
emphasize that
such a scenario
is only encountered
in the non-unitary open-quantum-system setup.

The clarification of the apparent puzzle
was published in Refs.~\cite{corridors,admissible}.
For the sake of clarity we
picked up there just
the ``extreme''
matrix~(\ref{hisset}) as an unperturbed operator.
Then, for any perturbed Hamiltonian
 \be
 H^{}(g)
 =
 J^{({N})}(\eta)+V(g)
 \label{herset}
 \ee
we showed that
the
class of perturbations $V(g)={\cal O}(g)$
characterized as ``sufficiently
small'' in the conventional
open-system norm of
the unphysical Hilbert space ${\cal H}_{(auxiliary)}$
has to be re-classified
as unacceptable, always containing
perturbations which prove unbounded
when measured in the correct
closed-system norm of space ${\cal H}_{(physical)}$.

In Ref.~\cite{corridors} these observations were complemented
by the consistent closed-system interpretation of
the perturbed models (\ref{herset})
in ${\cal H}_{(physical)}$.
We demonstrated that
the standard requirement of the smallness of
the norm of
$V(g)$ in ${\cal H}_{(physical)}$
offers a natural picture of reality in
the vicinity of EP. We argued that
in connection with the evolution of models (\ref{herset})
in ${\cal H}_{(physical)}$
one can localize certain
non-empty corridors of unitary access
to the quantum phase transition extremes at
EPs.

A clear
separation between the open- and closed-system theories
must always be kept sufficiently well verbalized.
Partially, what is to be blamed for
the existing misunderstandings
is
the currently widely accepted terminology. Even
our present conventional
usage of the term ``non-Hermitian''
should be taken {\it cum grano salis}, i.e., with
understanding
of its true meaning.
The point is that in the closed-system context
our considerations will never contradict
the conventional formulations of quantum mechanics.
The operators of observables will always be self-adjoint.
The only necessary clarification is that
in the upgraded PHQM SP framework, all of the
computations are
realized
in a manifestly unphysical
Hilbert space ${\cal H}_{(auxiliary)}$ \cite{ali,Geyer}.
The conventional and correct physical
Hilbert space (say, ${\cal H}_{(physical)}$)
remains only available indirectly, via its representation in
${\cal H}_{(auxiliary)}$.

One of the key technical merits of the
PHQM amendment of the theory is that
the latter representation
of ${\cal H}_{(physical)}$ is
mediated by the mere
amendment of the inner product.
The resulting re-arrangements
of the usual SP model-building recipes then really
work with the operators
which are non-Hermitian (in
${\cal H}_{(auxiliary)}$).

\subsection{Tool: Schr\"{o}dinger equations on discrete lattices.}

The main purpose of our present message is to show that
the PHQM enhancement of the flexibility of the SP formalism
leads, near the EP singularities, to some
particularly important consequences.
This will be illustrated by a few not too complicated
benchmark
gain + loss Hamiltonians in which
we will
postulate the existence of two parameters
controlling the strength of the
two separate, independent
gain + loss subcomponents.

Via these toy models we will demonstrate that
one can achieve
several desirable
transmutations of the EPs (i.e.,
of the dynamics in their vicinity)
via the mere fine-tuned
interference between
the remote and central gain-plus-loss interactions.

For introduction let us recall
the ordinary-differential-operator non-Hermitian
square-well model of Ref.~\cite{ptsqw}.
The unitarity (i.e., the reality of the spectrum
of bound states) has only been guaranteed there in a finite
interval of the strength (say, $g$) of the non-Hermiticity.
The reality (i.e., observability) has been lost due to the
EP-related mechanism of the merger of
the ground state with the first excited state,
$\lim_{g \to g^{(EP)}_0}\,[E_0(g)-E_{1}(g)]=0$.

With the further
growth of the non-Hermiticity of $H(g)$
beyond its EP value $g^{(EP)}=g^{(EP)}_0$, further
mergers occurred,
and all of them were followed by the complexifications of the
energies of the higher and higher excited states.
The process
involved, gradually, the whole spectrum,
resulting in the formation of an infinite sequence of
exceptional points $g^{(EP)}$
such that
$\lim_{g \to g^{(EP)}}\,[E_n(g)-E_{n+1}(g)]=0$.

Such a behavior of the EPs appeared to be generic.
Typically, the phenomenologically relevant boundary of
${\cal D}_{(physical)}$
only contained,
in the vast majority of the elementary closed-system models,
a single isolated EP singularity.
A richer, multi-parametric structure of the Hamiltonian
appeared necessary for the realization of any
more interesting scenario.

In order to avoid
the loss of the easy mathematical tractability of the
desirable toy models
we decided to redirect our attention
from the differential
Hamiltonians
$H = -\triangle + V(x)$
to
their difference-operator analogues.
The most straightforward implementation of such an idea is
easy: One simply replaces
the continuous
real line of coordinates $x \in \mathbb{R}$
by an equidistant lattice of grid points
 \be
x_{k+1}=x_{k}+h\,, \ \ \ \ \ \ \ \
  k = 0,1, \ldots, N
   \,.
  \label{RKg}
 \ee
This opens
the possibility of replacement of the conventional
differential Schr\"{o}dinger equation
by its difference-equation analogue
 \be
  -
 \frac{\psi(x_{k+1})-2\,\psi(x_{k})+\psi(x_{k-1})}{h^2}
  +V(x_k)
 \,\psi(x_{k})
 =E\,\psi(x_k)\,, \ \ \ \
  k = 1,2, \ldots, N
 \,.
 \label{diskr}
 \ee
With the equally conventional Dirichlet
asymptotic boundary conditions
$\psi(x_0)=\psi(x_{N+1})=0$
the construction of bound states is then
reduced
to the mere linear algebraic problem
 \be
 \left (
 \begin{array}{ccccc}
 v_1&-1&&&\\
 -1&v_2&-1&&\\
 &-1&\ddots&\ddots&\\
 &&\ddots&v_{N-1}&-1\\
 &&&-1&v_N
 \ea
 \right )\,
 \left (
 \ba
 \psi_1\\
 \psi_2\\
 \vdots\\
 \psi_N
 \ea
 \right )
 =F\,
 \left (
 \ba
 \psi_1\\
 \psi_2\\
 \vdots\\
 \psi_N
 \ea
 \right )\,.
 \label{sematr}
 \ee
In this local-interaction model the Hamiltonian
contains just an $N-$plet of the dynamics-determining diagonal matrix
elements $v_k=h^2V(x_{k})$ yielding the spectrum
of the re-scaled and
shifted bound-state  energies $F_n=h^2E_n-2\ $ with $n=0,1,\ldots, N-1$.

Our interest in Eq.~(\ref{sematr}) was initially inspired by the
popularity of the non-Hermitian Hamiltonians with real spectra
\cite{Carl,Carlbook}. Various non-analytic
square-well realizations of the potentials have been
studied
in this
direction of research \cite{quesne,quesneb,quesnec,quesnek}.
In
these analyses an important
role was played by the
discrete models
as sampled by Eq.~(\ref{sematr}) \cite{matcha,discr3,discr3d,match}.

In an introductory methodical remark let us
pick up $N=6$ and let us
consider Eq.~(\ref{sematr}) with one of the most elementary
constant-interaction
Hamiltonians
 $$
 H^{(6)}(w)=
\left[ \begin {array}{cccccc} -iw&-1&0&0&0&0
\\\noalign{\medskip}-1&-iw&-1&0&0&0\\\noalign{\medskip}0&-1&-iw&-1&0&0
\\\noalign{\medskip}0&0&-
1&iw&-1&0\\\noalign{\medskip}0&0&0&-1&iw&-1\\\noalign{\medskip}0&0&0&0
&-1&iw\end {array} \right]\,.
 $$
The brute-force numerical analysis
reveals that
in spite of the non-Hermiticity of the matrix, its
spectrum is real (i.e., in principle, observable) inside a unique
unitarity-compatible interval of $$w \in {\cal D}_{(physical)}
\approx (-0.322,0.322).$$
Along the whole real line of
parameters
the model supports the existence of as many as
four separate exceptional points,
viz.,
 $$
  \{- 0.54006,\,-0.32215,\,0.32215,\,0.54006\}\,.
 $$
The distance of the outer pair of these EPs
from ${\cal D}_{(physical)}$ is not zero so that they
cannot play any immediate physical
role. Their existence is only considered interesting
in mathematics
(or in the open-system physical setup)
where people are trying to describe,
irrespectively of the condition of
unitarity, the whole
spectrum.

\section{Realization: Models with two free parameters.}

Most of the above-mentioned
studies
confirmed the expectations that at the
sufficiently small non-Hermiticities the spectra of the energy
eigenvalues should be real \cite{Carl,ali,Langer}.
In our present
paper we intend to complement these results by the study of
models in which, via the manipulation of the EPs,
one could control the qualitative dynamics directly.
In a search for such
models one intends to control the
positions of EPs using several independent variable parameters.
In such a project
the main obstacles would be technical
because
besides a few most elementary matrix structures
the brute-force
numerical localization of the EPs is a
badly ill-conditioned problem \cite{nev,Kgt1,neu}.

This being said,
the comparatively transparent and feasible
study of the EPs
can still be based on Schr\"{o}dinger Eq.~(\ref{sematr})
in which
almost all of the
matrix elements
$v_k=h^2V(x_{k})$
of the local interaction term
would be assumed to vanish.
In our present paper we will study, first of all,
the two-parametric family of Hamiltonians
 \be
 H^{(N)}(\varrho,w)=
 \left[ \begin {array}{ccccc|ccccc}
 -{\rm i}{{} \varrho}&-1&0&\ldots&0&0&\ldots&
 &\ldots&0
 \\
 -1&{{} {0}}&-1&\ddots&\vdots&\vdots&\ddots&&&\vdots
 \\
 0 &\ddots&\ddots&\ddots&0&\vdots&&&&
 \\
 \vdots
 &\ddots&-1&{{} {0}}&-1&0&&&\ddots&\vdots
 \\
 0&\ldots&0&-1&-
 {\rm i}\,w&-1&0&\ldots&\ldots&0
 \\
 \hline
 0&\ldots&\ldots&0&-1&{\rm i}\,w&-
 1&0&\ldots&0
 \\
 \vdots&\ddots&&&0&-1&{{} {0}}&-1&\ddots&\vdots
 \\
 &&&&\vdots&0&\ddots&\ddots&\ddots&0
 \\
 \vdots
 &&&\ddots&\vdots&\vdots&\ddots&-1&{{} {0}}&-1
 \\
 {}0&\ldots&&\ldots&0&0&\ldots&0&-1&
 {\rm i}{{} \varrho}
 \end {array} \right]\,.
 \label{latti}
 \ee
in which $N=2K$ is even
and in which the central and remote parts of the interaction
(with the respective strengths $w$ and $\varrho$) are well separated.

\subsection{The confluence of EPs controlled by the
fine-tuning of the remote gain-and-loss
interaction.}

Technically, the separation of the influence of $w$ and $\varrho$
can simply be strengthened, whenever needed, by the choice of
a sufficiently large matrix dimension $N$.
At the same time,
the potentially
adverse aspect of the growth of $N$
(making the secular equation less easily tractable)
can very easily be suppressed
using the dedicated
$N-$independent matching method of Ref.~\cite{match}.
Using this method one can always try to test whether
the bound-state spectrum of the
closed-system
toy-model Hamiltonian (\ref{latti}) is real.

Usually, the answer becomes affirmative for the parameters
lying inside a two-dimensional unitarity-compatible
(and, say, real) domain ${\cal D}_{(physical)}$.
Within the framework of our present project
we will only be interested in the situations in which
one of the parameters is fixed while the other one approaches
the boundary $\partial {\cal D}_{(physical)}$
of the energy-reality domain.
What one then {\it a priori\,} expects is that for the
different choices of the fixed parameter
the mergers of the energies encountered
at the EP boundary
might be of different types.

\begin{figure}[h]                    
\begin{center}                         
\epsfig{file=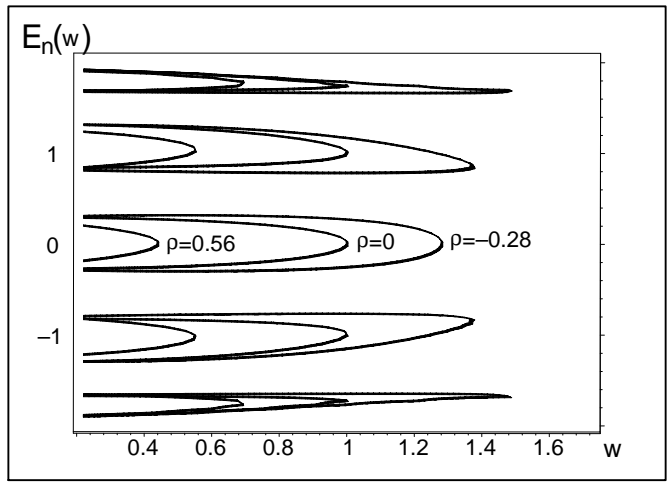,angle=270,width=0.5\textwidth}
\end{center}    
\caption{The $N-$plets of the real energy eigenvalues
$E_n({\varrho},w)$ of Hamiltonian (\ref{latti}) with $N=10$ and
$n=0,1,\ldots,N-1$ at constant ${\varrho}=-0.28 $ (the rightmost
curves), at constant ${\varrho}=0 $ (the curves in the middle) and
at constant ${\varrho}=0.56 $ (the leftmost curves).
 \label{ouqa}}
\end{figure}





In the first test of the hypothesis let us
choose $N=10$. Once we fix
the remote-gain-and-loss parameter ${\varrho}$
we may study the
spectra of energies $E_n({\varrho},w)$, $n=0,1,\ldots,N-1$
as functions of the remaining variable parameter $w$.
Numerically we evaluated several characteristic samples
of such a type.
In Fig.~\ref{ouqa} we displayed three of them. The energies are
shown there as functions of $w$,
calculated at the three different values of
the remote-non-Hermiticity parameter ${\varrho}$,
viz., at
${\varrho}=-0.28$ (the rightmost curves), of
${\varrho}=0$ (the middle-positioned curves), and of
${\varrho}=0.56$ (the leftmost curves).
After inspection of this picture it is possible to formulate
the following observation.

\begin{conj}
\label{kojed}
At the
sufficiently small real values of the remote-interaction
parameter ${\varrho}$
the separate EP2 coordinates $w=w^{(EP2)}_j({\varrho})$
of the pairwise mergers
of the neighboring energies $E_{2j}^{(10)}({\varrho},w)$
and $E_{2j+1}^{(10)}({\varrho},w)$
are all strictly decreasing functions of ${\varrho}$ such that
 \be
 w^{(EP2)}_2({\varrho})\leq
 w^{(EP2)}_1({\varrho})=
 w^{(EP2)}_3({\varrho})\leq
 w^{(EP2)}_0({\varrho})=
 w^{(EP2)}_4({\varrho})\,.
 \ee
The inequalities become sharp at ${\varrho} \neq 0$.
At larger $w>w^{(EP2)}_j({\varrho})$
the respective pairs of energies become complex so that
the local boundary of ${\cal D}_{(physical)}$
becomes determined by
function $w^{(EP2)}_2({\varrho})$.
\end{conj}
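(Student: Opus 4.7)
The engine of the proof is a hidden discrete symmetry of $H^{(10)}(\varrho,w)$. Introduce the diagonal chiral matrix $\Gamma=\mathrm{diag}(1,-1,1,-1,\ldots,1,-1)$ and the reversal $P$ defined by $(P\psi)_k=\psi_{11-k}$, and decompose $H^{(10)}(\varrho,w)=H_0+V$ with $H_0$ the zero-diagonal tridiagonal hopping kernel and $V=\mathrm{i}\,\mathrm{diag}(-\varrho,0,0,0,-w,w,0,0,0,\varrho)$. Four elementary identities hold by direct inspection: $\Gamma H_0\Gamma=-H_0$ (bipartite chiral symmetry of the hopping), $PH_0P=H_0$, $\Gamma V\Gamma=V$ (since $V$ is diagonal), and $PVP=-V$ (the imaginary diagonal is odd under $k\mapsto 11-k$). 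Setting $S=\Gamma P$, which satisfies $S^2=-I$ and $S^{-1}=P\Gamma$, one obtains $SHS^{-1}=\Gamma(PHP)\Gamma=\Gamma(H_0-V)\Gamma=-H_0-V=-H$, so $H^{(10)}(\varrho,w)$ is conjugate to its own negative and its spectrum is invariant under $E\mapsto-E$.

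Order the real eigenvalues as $E_0<E_1<\cdots<E_9$; the invariance forces $E_k=-E_{9-k}$. If the pair $(E_{2j},E_{2j+1})$ coalesces at some $w$, the symmetry makes the pair $(E_{8-2j},E_{9-2j})$ coalesce at the same $w$, which translates into $w^{(EP2)}_j(\varrho)=w^{(EP2)}_{4-j}(\varrho)$. The two nontrivial instances are $w_0=w_4$ and $w_1=w_3$, while $j=2$ is self-paired and its merger value is necessarily $E=0$. This pairing step is unconditional and makes no use of the size of $\varrho$.

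For the remaining quantitative statements I would work with the characteristic polynomial $\chi(E;\varrho,w)=\det(EI-H^{(10)})$ and its $E$-discriminant $D(\varrho,w)=\mathrm{disc}_E\chi$; by the half-chain matching method of \cite{match} both can be put in an explicit low-degree form in $E,w,\varrho$. The confluence at $\varrho=0$ amounts to the existence of a critical $w_c$ at which $H^{(10)}(0,w_c)$ is nilpotent, all ten eigenvalues having collapsed to zero by $E\mapsto-E$ together with the vanishing trace; the Jordan structure is of the canonical type~(\ref{hisset}) supplied by the construction of \cite{procA}. Strict monotonicity of each $w^{(EP2)}_j(\varrho)$ then follows from the implicit function theorem applied to the factor of $D$ associated with the given pair, the sign of $dw^{(EP2)}_j/d\varrho$ being controlled by a first-order perturbative computation between the left and right Jordan vectors at the EP. Finally, the ordering $w^{(EP2)}_2\le w^{(EP2)}_1\le w^{(EP2)}_0$ is obtained by comparing, at $\varrho=0$, the leading $\varrho$-splitting rates of the five coincident EP2's.

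The pairing is the rigorous core and depends only on the $\Gamma P$ conjugation. The main obstacle is the strict ordering and monotonicity: one must show that the first nonvanishing $\varrho$-derivative of each splitting $w^{(EP2)}_{j'}-w^{(EP2)}_j$ has the correct sign and that no accidental recrossings occur in a neighbourhood of $\varrho=0$. This step genuinely depends on the arithmetic of the $N=10$ secular equation rather than on symmetry alone, and a rigorous bound in place of the numerical picture of Figure~\ref{ouqa} would call for a careful perturbative expansion near the confluent EP.
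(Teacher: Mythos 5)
First, note that the paper itself offers \emph{no} proof of this statement: it is presented explicitly as a Conjecture, supported only by the numerical spectra of Fig.~\ref{ouqa}, and the paper immediately calls it ``a purely numerically supported hypothesis.'' Your $\Gamma P$ pairing argument is therefore a genuine addition rather than a rediscovery. The four identities you list are correct for the $N=10$ specialization of Hamiltonian (\ref{latti}), the conjugation $SHS^{-1}=-H$ does force the spectrum to be invariant under $E\mapsto -E$ (consistent with the symmetry of the spectral locus noted at Fig.~\ref{lo6ja3}), and the consequence $w^{(EP2)}_j=w^{(EP2)}_{4-j}$ — i.e.\ the two equalities in the displayed chain, plus the fact that the central pair merges at zero energy — follows rigorously, modulo the (numerically evident but unproven) assumption that the coalescences do occur between the neighbours $E_{2j}$ and $E_{2j+1}$. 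That is more than the paper establishes for this statement.

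The rest of the proposal, however, does not close the conjecture, and one step in your sketch is concretely wrong. You assert that at $\varrho=0$ the confluence produces a \emph{nilpotent} $H^{(10)}(0,w_c)$ with all ten eigenvalues at zero and Jordan structure of the single-block type (\ref{hisset}). This contradicts the paper's own Proposition \ref{propa} and the factorization (\ref{fac10}), $P^{(10)}(F,\pm 1)=F^2(F^2-1)^2(F^2-3)^2$: at $w=1$, $\varrho=0$ the canonical form is the direct sum (\ref{mafo}) of five $2\times 2$ Jordan blocks with the five \emph{distinct} eigenvalues $0,\pm 1,\pm\sqrt{3}$ (cf.\ Eq.~(\ref{eosset})), i.e.\ geometric multiplicity $K=5$, not $K=1$; $E\mapsto -E$ symmetry plus zero trace does not collapse the spectrum to a point. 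This matters for your programme, because degenerate perturbation theory about a single $10\times 10$ Jordan block (splittings of order $\varepsilon^{1/10}$) is entirely different from perturbation about five separate $2\times 2$ blocks (splittings of order $\varepsilon^{1/2}$ within each block), so the ``leading $\varrho$-splitting rates'' you propose to compare would be computed from the wrong unperturbed structure. Beyond that, the strict monotonicity of each $w^{(EP2)}_j(\varrho)$, the ordering $w_2\le w_1\le w_0$, and the sharpness of the inequalities at $\varrho\ne 0$ remain unproven in your write-up — you acknowledge this yourself — so the proposal upgrades the equalities in the conjecture to a theorem but leaves its quantitative content where the paper left it, namely resting on Fig.~\ref{ouqa}.
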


Beyond such a purely numerically supported hypothesis
(which could probably be generalized to hold at any
matrix dimension $N=2K$),
the inspection of Fig.~\ref{ouqa} also inspired the
formulation of the following exact result valid at all $K$s.


\begin{prop}
\label{propa}
During the passage of
the remote coupling $\,\varrho\,$ through the origin
at $\varrho=0$, Hamiltonian
(\ref{latti}) encounters the instantaneous confluence
of all of the separate exceptional points of order two,
 \be
 w^{(EP2)}_j(0)=1\,,
 \ \ \ j = 0,1,\ldots,N-1\,.
 \ee
The canonical form of
the $w \to 1$ limit of matrix (\ref{latti})
then
acquires the $N$ by $N$ matrix form
 \be
 H^{(2K)}(0,1)\ \sim \
 \bigoplus_{j=1}^K\,
 J^{(2)}(\eta_j)
 \label{mafo}
 \ee
of a direct sum of $K$
two-dimensional Jordan blocks as defined in Eq.~(\ref{hisset}).
\end{prop}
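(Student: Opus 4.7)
\medskip
\noindent
\textbf{Proof proposal.}
The plan is to solve the eigenvalue problem $H^{(2K)}(0,1)\psi=\lambda\psi$ directly via the transfer-matrix method applied to the tridiagonal three-term recursion of Eq.~(\ref{sematr}). With the standard parametrization $\lambda=-2\cos\theta$ and $S_n\equiv\sin(n\theta)/\sin\theta$, the left Dirichlet boundary condition $\psi_0=0$ fixes the solution throughout the free left block $k=1,\ldots,K$ up to normalization, namely $\psi_k=S_k$. The two modified recurrences at $k=K$ (diagonal entry $-i$) and $k=K+1$ (diagonal entry $+i$) then contribute small algebraic corrections; after a few lines of Chebyshev algebra I expect a closed-form propagation formula of the shape
$\psi_{K+j} = S_{K+j} + S_K S_{j-1} - i\,S_{K+1-j}$
valid for all $1\le j\le K+1$.

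The decisive step is imposing the right Dirichlet condition $\psi_{2K+1}=0$. Written out, this becomes the secular equation $S_{2K+1}+S_K^{\,2}=0$. The key input is the Chebyshev identity $S_{2K+1}=S_{K+1}^{\,2}-S_K^{\,2}$: the $S_K^{\,2}$ contributions cancel exactly and the secular equation collapses to the purely squared form $S_{K+1}(\lambda)^2=0$. This explicitly exhibits the characteristic polynomial of $H^{(2K)}(0,1)$ (up to a nonzero constant) as a perfect square in $\lambda$, so its $K$ distinct roots $\eta_j=-2\cos\!\bigl(j\pi/(K+1)\bigr)$, $j=1,\ldots,K$, are precisely the eigenvalues and each one has algebraic multiplicity exactly two.

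To upgrade this spectral information to the claimed Jordan structure I would then invoke the standard fact that any tridiagonal matrix whose sub- and super-diagonal entries are all nonzero has geometric multiplicity one at every eigenvalue, because the eigenvector is uniquely determined up to scale by the initial condition $\psi_0=0$ together with the three-term recurrence. Since $H^{(2K)}(0,1)$ carries $-1$ on both off-diagonals, each doubly degenerate $\eta_j$ contributes a single Jordan block of size two, and the $K$ such blocks assemble exactly into the announced canonical form $\bigoplus_{j=1}^{K} J^{(2)}(\eta_j)$.

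The part I expect to be the main obstacle is the bookkeeping for the interference pattern created by the two centrally placed non-Hermitian diagonal entries and, above all, the recognition of the precise Chebyshev identity $S_{2K+1}=S_{K+1}^{\,2}-S_K^{\,2}$ that promotes the secular equation to a perfect square. Without it the complete confluence of the EP2s at $w=1$ looks like a numerical coincidence; with it the confluence is manifest. As a sanity check I would verify the mechanism at $K=1$, where $H^{(2)}(0,1)$ visibly has rank one and $\lambda=0$ as a double eigenvalue matching $J^{(2)}(0)$, and at $K=2$, where the predicted double roots $\lambda=\pm 1$ of $S_3(\lambda)^2=0$ reproduce the four-dimensional Jordan structure $J^{(2)}(1)\oplus J^{(2)}(-1)$.
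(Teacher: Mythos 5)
Your proof is correct, and it reaches the paper's secular equation by a genuinely different route. The paper exploits the ${\cal PT}$-symmetry of the model: it posits $\psi_{k+1}=(\alpha+{\rm i}\beta)U_k(x)$ on the left half and $\psi_{N-k}=(\alpha-{\rm i}\beta)U_k(x)$ on the right half, extracts from the two central rows the constraint $(\alpha^2+\beta^2)w=2\alpha\beta$ (hence $w=\sin 2\tau$ with $\alpha=\cos\tau$, $\beta=\sin\tau$), and obtains the secular equation $U_{J+1}(x)=[\alpha^2(\tau)-\beta^2(\tau)]\,U_J(x)$ valid for \emph{all} $w$; at $w=\pm1$ the bracket vanishes and the spectrum degenerates to the roots of $U_{J+1}$. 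You instead propagate the three-term recursion straight through both central rows without any symmetry ansatz and land on $S_{2K+1}+S_K^2=0$, which the product-to-sum identity $S_{2K+1}=S_{K+1}^2-S_K^2$ collapses to the perfect square $S_{K+1}^2=0$ — the same $K$ distinct doubly degenerate roots $\eta_j=-2\cos\bigl(j\pi/(K+1)\bigr)$, matching the paper's factorization $P^{(10)}(F,\pm1)=F^2(F^2-1)^2(F^2-3)^2$ at $K=5$. Your approach buys two things the paper leaves implicit: (i) the characteristic polynomial is exhibited directly as a square, so the double algebraic multiplicities are manifest rather than read off from a limiting secular equation; and (ii) the explicit geometric-multiplicity-one argument (nonzero off-diagonals force a one-dimensional eigenspace), which is exactly what upgrades ``double eigenvalues'' to the Jordan form $\bigoplus_j J^{(2)}(\eta_j)$ of Eq.~(\ref{mafo}) — the paper simply asserts this step. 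What the paper's $\tau$-parametrization buys in exchange is the behavior \emph{away} from $w=1$: the secular equation with the factor $\alpha^2-\beta^2$ plus the interlacing of the roots of $U_{J+1}$ and $U_J$ shows that the spectrum stays real and non-degenerate for $|w|<1$, so that the pairwise mergers genuinely occur \emph{first} at $w=1$; your argument, confined to $w=1$, establishes the confluence there but would need this extra (easy) step to justify the notation $w^{(EP2)}_j(0)=1$ in full.
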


 \noindent
In the limit of ${\varrho} \to 0$,
we witness the complete
degeneracy {\it alias\,} confluence
of all of the separate EP2s.
The rigorous proof will be given
in section \ref{sese} below. As a byproduct of this proof, also
the values of the limiting energies $\eta_j$ will be shown obtainable
in closed form.

\subsection{The confluence of EPs caused by the fine-tuning of
the central gain-and-loss
interaction.}

Our above-outlined projection
of the motion of the EP boundaries of the
two-dimensional
domain ${\cal D}_{(physical)}$
can be complemented
by the perpendicular sections
in which the value of $w$ is fixed.
We may expect that
at the boundary $\partial {\cal D}_{(physical)}$
the energies will
merge in a way
reflecting the characteristics of the
underlying EPs.

%
%

\begin{figure}[h]                    
\begin{center}                         
\epsfig{file=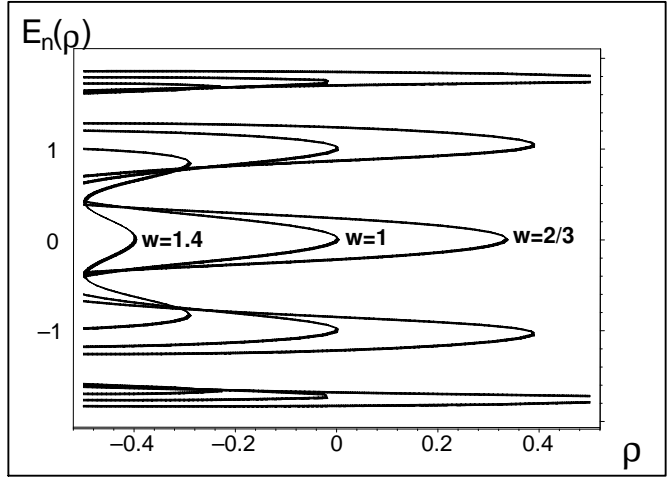,angle=270,width=0.5\textwidth}
\end{center}    
\caption{The $N-$plets of the real energy eigenvalues
$E_n({\varrho})$ of Hamiltonian (\ref{latti}) with $N=10$ and
$n=0,1,\ldots,N-1$ at constant ${w}=2/3 $ (the rightmost curves), at
constant ${w}=1 $ (the curves in the middle) and at constant
${w}=1.4 $ (the leftmost curves).
 \label{ouqas}}
\end{figure}


%




The resulting scenario sampled in
Fig.~\ref{ouqas}
is qualitatively not too different from
its predecessor of
Fig.~\ref{ouqa}.
At every fixed value of $w$,
the single central
EP2 energy-merger ${\varrho}^{(EP)}_2$ is smaller than its first
off-central doubly-degenerate
partner ${\varrho}^{(EP)}_1={\varrho}^{(EP)}_3$ which is, in its turn,
smaller than the second,
most
off-central partner doublet ${\varrho}^{(EP)}_0={\varrho}^{(EP)}_4$.
With the growth of $w$
sampled, in the picture, by the choice of $w=2/3$, $w=1$ and $w=1.4$,
all of the energy mergers
move leftwards.

\begin{conj}
\label{nejed}
In a vicinity of $w=1$
the separate EP2 coordinates $\varrho=\varrho^{(EP2)}_j({w})$
of the pairwise mergers
of the neighboring energies $E_{2j}^{(10)}({\varrho},w)$
and $E_{2j+1}^{(10)}({\varrho},w)$
are all strictly decreasing functions of ${w}$ such that
 \be
 \varrho^{(EP2)}_2({w})\leq
 \varrho^{(EP2)}_1({w})=
 \varrho^{(EP2)}_3({w})\leq
 \varrho^{(EP2)}_0({w})=
 \varrho^{(EP2)}_4({w})\,.
 \ee
The inequalities are certainly sharp at ${w} \neq 1$.
At the small and positive difference
$\varrho-\varrho^{(EP2)}_j({w})>0$
the respective pairs of energies cease to be real so that
the local boundary of ${\cal D}_{(physical)}$ is
prescribed by
function $\varrho^{(EP2)}_2({w})$.
\end{conj}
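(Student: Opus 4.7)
The plan is to mirror the argument used to establish Proposition~\ref{propa} and to exploit both the structural symmetries of the Hamiltonian~(\ref{latti}) and local perturbation theory around the confluent point $(\varrho,w)=(0,1)$. Because Proposition~\ref{propa} already identifies the canonical form~(\ref{mafo}) of $H^{(2K)}(0,1)$ as a direct sum of $K$ Jordan blocks $J^{(2)}(\eta_j)$, every EP2 branch $\varrho^{(EP2)}_j(w)$ collapses onto the common value $0$ as $w\to1$, and the question in a vicinity of $w=1$ reduces to a perturbative unfolding of this confluent EP.

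First, I would make the relevant symmetries of~(\ref{latti}) explicit. Denoting by $P$ the antidiagonal permutation $e_k\mapsto e_{N+1-k}$, a direct check gives $P\,H^{(2K)}(\varrho,w)\,P = H^{(2K)}(-\varrho,-w) = \overline{H^{(2K)}(\varrho,w)}$, so that (\ref{latti}) is $PT$-symmetric in the generalized sense and its spectrum is invariant under complex conjugation. Combined with the reflection symmetry of the underlying discrete Laplacian around the midpoint of its unperturbed spectrum, this pairs up the EP2 branches under the involution $j\leftrightarrow N/2-1-j$. For $N=10$ it forces precisely the equalities $\varrho^{(EP2)}_1(w)=\varrho^{(EP2)}_3(w)$ and $\varrho^{(EP2)}_0(w)=\varrho^{(EP2)}_4(w)$, leaving the central branch $\varrho^{(EP2)}_2(w)$ unpaired.

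Second, I would apply the Lidskii--Puiseux perturbation expansion to the Jordan decomposition~(\ref{mafo}). Writing the perturbation as $V=H^{(2K)}(\varrho,w)-H^{(2K)}(0,1)$ and decomposing it with respect to the Jordan basis produced in the proof of Proposition~\ref{propa}, one obtains explicit Puiseux expansions for the eigenvalues that split out of each block. From these series one reads off the leading locus of each EP2 branch in the $(\varrho,w-1)$ plane; the sign of its slope at $w=1$ yields the sign of $d\varrho^{(EP2)}_j/dw$. The strict decrease then reduces to checking that each such leading coefficient is negative, a concrete algebraic computation on the similarity matrix inherited from~(\ref{mafo}). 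The ordering $\varrho^{(EP2)}_2(w)\le\varrho^{(EP2)}_1(w)\le\varrho^{(EP2)}_0(w)$ would be established by verifying it at a single representative $w$ (where the numerics of Fig.~\ref{ouqas} are unambiguous) and then invoking continuity together with the fact that, inside ${\cal D}_{(physical)}$, two EP2 branches cannot cross each other without producing a confluent degeneracy of strictly higher multiplicity. The identification of the local boundary $\partial{\cal D}_{(physical)}$ with the graph of $\varrho^{(EP2)}_2(w)$ is then immediate, since the physical domain is cut off by the first pair of real energies to complexify as $\varrho$ grows.

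The main obstacle will be the monotonicity claim away from $w=1$. The Lidskii--Puiseux expansion is intrinsically local at the confluent EP, so extending strict decrease of $\varrho^{(EP2)}_j(w)$ to a full one-sided neighborhood of $w=1$ requires either a global continuation argument or a uniform estimate guaranteeing that the leading Puiseux slope never changes sign. A secondary difficulty is a fully rigorous derivation of the strict ordering of the EP2 branches: classical Cauchy-type interlacing is unavailable in the non-Hermitian regime, and a substitute must be formulated inside the physical Hilbert space ${\cal H}_{(physical)}$ via the metric $\Theta(\varrho,w)$, which itself becomes singular precisely at the EP coordinates one wishes to compare.
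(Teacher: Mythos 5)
This statement is labelled a \emph{Conjecture} in the paper, and the paper supplies no proof of it: like its twin, Conjecture~\ref{kojed}, it is described as a ``purely numerically supported hypothesis'' read off from Fig.~\ref{ouqas}. The only rigorously established neighbouring facts are Propositions~\ref{propa} and~\ref{propb}, which concern the single confluent point $(\varrho,w)=(0,1)$ and are proved by the Chebyshev-polynomial construction of Section~\ref{sese} together with the unitary swap $\varrho\leftrightarrow w$. So there is no paper proof for your argument to be measured against, and what you have written is, by your own admission, a programme rather than a proof. The part of it that does work is the symmetry argument: the relations $P H^{(2K)}(\varrho,w)P=\overline{H^{(2K)}(\varrho,w)}$ and $SPH^{(2K)}(\varrho,w)PS=-H^{(2K)}(\varrho,w)$ (with $S$ the alternating sign matrix) make the spectrum invariant under both complex conjugation and $F\to-F$, which pairs the EP2 events as $j\leftrightarrow 4-j$ and yields the two equalities $\varrho^{(EP2)}_1=\varrho^{(EP2)}_3$ and $\varrho^{(EP2)}_0=\varrho^{(EP2)}_4$. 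That is a genuine rigorous improvement over the paper, which asserts these equalities only on numerical grounds.

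The remaining claims, however, are not reached by your plan. The Lidskii--Puiseux expansion about $(\varrho,w)=(0,1)$ is taken at precisely the point where all five branches coincide and the asserted inequalities degenerate to equalities; to extract the \emph{ordering} and the \emph{signs of the slopes} you would have to carry the unfolding of the $K=5$ direct sum (\ref{mafo}) to the order at which the five branches actually separate, and you have not exhibited that computation nor argued that its relevant coefficients are nonzero and of one sign. Your fallback for the ordering --- check it at one $w$ and invoke ``branches cannot cross without a higher confluence'' --- is not a valid exclusion principle here: a crossing of two EP2 branches produces an EP2$\oplus$EP2 confluence, which is exactly the kind of degeneracy this paper is built around (it occurs at $(0,1)$ itself), so nothing forbids it at other parameter values. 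Finally, strict monotonicity on a full one-sided neighbourhood of $w=1$ is, as you note, outside the reach of any purely local expansion. In short: the symmetry-forced equalities can be considered proved by your argument, but the strict decrease, the strict inequalities at $w\neq 1$, and the identification of $\partial{\cal D}_{(physical)}$ with the graph of $\varrho^{(EP2)}_2$ remain at the same conjectural, numerics-supported level as in the paper.
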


In the limit of $w \to 1$ all of the EP2 singularities
may be guessed to coincide
forming a degenerate quintuplet EP10.
Such a possibility
advised by
the inspection of Fig.~\ref{ouqas} inspired the
following proposition in which the value of
the even matrix dimension $N=2K$ can be arbitrary.


\begin{prop}
\label{propb}
During the passage of
the central coupling $\,w\,$ through the value of
$w=w^{(EP)}=1$, Hamiltonian
(\ref{latti}) encounters the instantaneous confluence
of all of the separate exceptional points of order two,
 \be
 \varrho^{(EP2)}_j(1)=0\,,
 \ \ \ j = 0,1,\ldots,N-1\,.
 \ee
The canonical form of
the $\varrho\to 0$ limit of matrix (\ref{latti})
acquires the same $N$ by $N$ matrix form (\ref{mafo})
as in Proposition \ref{propa}
\end{prop}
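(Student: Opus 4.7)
The plan is to reduce Proposition \ref{propb} almost entirely to Proposition \ref{propa}, since both statements concern precisely the same singular matrix $H^{(2K)}(0,1)$ approached from two transverse directions in the $(\varrho,w)$-plane. The Jordan canonical form of a matrix at a given point is an intrinsic property of the matrix, so the identity (\ref{mafo}) and the explicit values of the limiting energies $\eta_j$ are inherited verbatim from the constructive proof to be given in section \ref{sese}. The only genuinely new content is therefore the assertion $\varrho^{(EP2)}_j(1)=0$ for all $j$, i.e.\ the claim that when $w=1$ is frozen and $\varrho$ varies, each of the $K$ pairs of eigenvalues that merge at $\eta_j$ does so via a bona fide EP2, and that all $K$ such EP2 values of $\varrho$ coincide at the origin.

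I would deduce this from a Puiseux-type local perturbation analysis at the non-derogatory direct summand $J^{(2)}(\eta_j)$. Writing the transverse perturbation direction as
$$
V\;=\;\frac{\p H^{(2K)}(\varrho,1)}{\p \varrho}\bigg|_{\varrho=0}
\;=\;{\rm diag}(-{\rm i},0,\ldots,0,{\rm i}),
$$
and denoting by $v_j$ the right eigenvector, by $\tilde v_j$ the Jordan chain vector (so that $(H^{(2K)}(0,1)-\eta_j)\tilde v_j=v_j$), and by $u_j$ the left eigenvector normalised through $u_j^{T}\tilde v_j=1$, the two eigenvalues branching off the coincident point expand as
$$
E_{2j,2j+1}(\varrho,1)\;=\;\eta_j\;\pm\;c_j\sqrt{\varrho}+O(\varrho),
\qquad c_j^{2}\;=\;u_j^{T}\,V\,v_j.
$$
A square-root branching with $c_j\neq 0$ is exactly the definition of an EP2 at $\varrho=0$, so the proposition will follow as soon as one checks $c_j\neq 0$ for every $j=1,\ldots,K$. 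Because $V$ is supported only on the two endpoint sites $1$ and $N$, the inner product $u_j^{T}V v_j$ reduces to a simple expression in the endpoint components of the eigenvectors of $H^{(2K)}(0,1)$.

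The hard part, and the step I expect to be the main obstacle, is precisely this non-degeneracy check: one must verify that the first and last components of both the left and right eigenvectors $u_j,v_j$ at each $\eta_j$ are non-zero, so that no Jordan block is accidentally preserved by the $\varrho$-perturbation. This is where explicit eigenvector information obtained in the proof of Proposition \ref{propa} is indispensable; a shortcut may be available from the reflection-type symmetry that interchanges sites $k \leftrightarrow N+1-k$ combined with complex conjugation, which forces the eigenvectors to be either symmetric or antisymmetric about the centre of the chain and hence rules out vanishing at the endpoints unless the eigenvector is identically zero on one half of the lattice.

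Once the $c_j\neq 0$ verification is complete, the monotonicity and positioning clauses paraphrased in Conjecture \ref{nejed} drop out from the sign structure of the same Puiseux coefficients: the purely imaginary profile of $V$ forces the two eigenvalue branches to leave the real axis in complex-conjugate pairs as $\varrho$ increases past $0$ (with $w$ near $1$), so the local boundary of ${\cal D}_{(physical)}$ in the $\varrho$-direction is indeed pinned to the confluent value $\varrho=0$, and the entire statement of Proposition \ref{propb} is obtained.
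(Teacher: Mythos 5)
Your route is genuinely different from the paper's. The paper disposes of Proposition \ref{propb} in one line, by asserting that $H^{(2K)}(\varrho,w)$ is unitarily similar to $H^{(2K)}(w,\varrho)$ under the block-diagonal permutation that reverses the site ordering inside each half of the lattice, so that the statement becomes Proposition \ref{propa} with the two couplings interchanged. (As an aside, that reduction is itself questionable as stated: the permutation sends the central bond $(K,K+1)$ to the corner position $(1,2K)$, and already at $N=4$ the characteristic polynomials of $H^{(4)}(0,t)$ and $H^{(4)}(t,0)$ differ by $t^{2}$, so the two matrices cannot be similar. A correct argument therefore has to work transversally, which is exactly what you propose, and your Puiseux strategy -- unfold the confluent matrix $H^{(2K)}(0,1)$ in the $\varrho$ direction and show that every $2\times2$ Jordan block splits with a nonvanishing square-root coefficient -- would, if completed, be the more robust proof.)

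As written, however, your proposal has a genuine gap exactly where you flag it: the condition $c_j^{2}=u_j^{T}Vv_j\neq0$ is announced but never verified, and the shortcut you sketch does not close it. For an irreducible tridiagonal matrix the first and last components of any left or right eigenvector are automatically nonzero by the three-term recursion, so non-vanishing of the endpoint components is not where the difficulty sits; the danger is a cancellation between the two endpoint contributions in $u_j^{T}Vv_j=-{\rm i}\,u_{j,1}v_{j,1}+{\rm i}\,u_{j,N}v_{j,N}$. Ruling that out requires the explicit eigenvectors of Section \ref{sese}: since $H^{(2K)}(0,1)$ is complex symmetric one may take $u_j\propto v_j$; the ${\cal PT}$-symmetry (\ref{petes}) gives $v_{j,N}=v_{j,1}^{*}$, hence $c_j^{2}\propto 2\,{\rm Im}\bigl(v_{j,1}^{2}\bigr)$; and the Chebyshev ansatz gives $v_{j,1}=\alpha+{\rm i}\beta=e^{{\rm i}\tau}$ with $\tau=\pm\pi/4$ at $w=1$, so that ${\rm Im}\bigl(v_{j,1}^{2}\bigr)=\pm1\neq0$. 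Until some such computation is supplied (together with the observation, already contained in the proof of Proposition \ref{propa}, that the $\eta_j$ are mutually distinct so the $K$ blocks do not interact at leading order), the central claim $\varrho^{(EP2)}_j(1)=0$ remains unproved; the inheritance of the canonical form (\ref{mafo}), by contrast, is indeed immediate since both propositions refer to the same matrix $H^{(2K)}(0,1)$.
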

\begin{proof}
In the light of Proposition \ref{propa} the proof is elementary
because our Hamiltonian matrix $H^{(2K)}(\varrho,w)$ can be
transformed into matrix $H^{(2K)}(w,\varrho)$
using an elementary block-diagonal
unitary-transformation matrix ${\cal U}^{(2K)}$
defined as a direct sum of two $K$ by $K$
antidiagonal unit matrices ${\cal I}^{(K)}$ with Kronecker-delta
elements ${\cal I}^{(K)}_{i,j} ={\delta}_{i,K+1-i}$, $i=1,2,\ldots,K$.
\end{proof}

\section{Exact solutions at $\varrho=0$.\label{sese}}

Our illustrative Hamiltonian (\ref{latti})
is the special case of a
broader class of models of Eq.~(\ref{sematr}). Some of
them might be analytically solvable
by the matching method of Ref.~\cite{match}.
What would be required is a special
choice of the matrix elements $v_j$
of the interaction.
For the sake of simplicity we decided to
consider
just the very special model
(\ref{latti}),
with the study of its possible
generalizations left to the readers.

\subsection{Constructive proof of proposition \ref{propa}.}

The independent variability of the two real parameters
$\varrho$ and
$w$ in  (\ref{latti})
proved sufficient for our present
illustration purposes.
Now we only have to prove Proposition \ref{propa}
in which
our $\varrho=0$ toy-model Hamiltonian
has even $N=2J+2$ and mere
two nonzero
values of $v_j$,
 \be
 H^{(2J+2)}(w)=
  \left ( \begin {array}{cccc|cc|cccc}
  0&-1&0&\ldots&0&0&0&\ldots&\ldots&0
 \\{}
 -1&0&-1&\ddots&\vdots&\vdots&\vdots&&&\vdots
 \\{}
 0&-1&\ddots&\ddots &0&\vdots&&&&
 \\{}
 \vdots&\ddots&\ddots&0&-1&0&\vdots&&&\vdots
 \\
 \hline
 {}0&\ldots&0&-1&-{\rm i}w&-1&0&\ldots&\ldots&0
 \\{}0
 &\ldots&\ldots&0&-1&{\rm i}w&-1&0&\ldots&0
 \\
 \hline
 {}\vdots &&&\vdots&0&-1&0&-1&\ddots&\vdots
 \\{}
 &&&&\vdots&0&-1&\ddots&\ddots&0
 \\{}
 \vdots&&&\vdots&\vdots&\vdots&\ddots&\ddots&0&-1
 \\{}
 0&\ldots&\ldots&0& 0&0&\ldots&0&-1&0
 \end {array} \right )
 \label{mytoy}
 \ee
The related Schr\"{o}dinger equation is
tractable
by the standard numerical diagonalization techniques.
The task becomes simplified when one
introduces,
in the spirit of Refs.~\cite{BB,BG},
the requirement
${\cal PT}\,H^{(2J+2)}(w)=H^{(2J+2)}(w)\,{\cal PT}$ of
${\cal PT}-$symmetry defined
in terms of the
antidiagonal-unit matrix ${\cal P}$ [changing the parity and causing
the left-right inversion of the spatial lattice (\ref{RKg})] and
of the
antilinear complex-conjugation operator ${\cal T}$ (which simulates
the time reversal in Schr\"{o}dinger equation).
Whenever
our real parameter $w$ is such that
the spectrum remains real and non-degenerate,
Schr\"{o}dinger equation (\ref{sematr}) will then yield,
exclusively, just the ${\cal
PT}-$symmetric eigenstates,
 \be
  {\cal PT}\,
 \left (
 \ba
 \psi_1\\
 \psi_2\\
 \vdots\\
 \psi_N
 \ea
 \right )  \sim
 \left (
 \ba
 \psi_1\\
 \psi_2\\
 \vdots\\
 \psi_N
 \ea
 \right )
%
 \,.
 \label{petes}
 \ee
We may set $\psi_N=\psi_1^*$, $\psi_{N-1}=\psi_2^*$ (etc), we may
abbreviate $h^2E=-2x=-2\,\cos \theta$, and
we may recall the definition of
the Tshebyshev plynomials of the second kind,
 \be
 U_k(\cos \theta)=\frac{\sin (k+1)\theta}{\sin \theta}\,,
 \ \ \ \ k = 0, 1, \ldots\,.
 \label{forka}
 \ee
The recurrences satisfied by these
polynomials \cite{cebysevs,cebysevsk} enable us to guess the
ansatz
 \ben
 \psi_{k+1}=(\alpha+{\rm i}\,\beta)\,U_k(x)\,,
 \ \ \ \ k = 0, 1, \ldots, J\,
 \een
containing
just a pair of unspecified real parameters $\alpha$ and $\beta$.
Its use
converts the first $J$ lines of relations (\ref{sematr})
into identities. Recalling the  ${\cal PT}-$symmetry of the
model we may also write down the rest of the components of the
eigenvector in closed form reflecting the validity
of the last $J$ lines of relations
(\ref{sematr}),
 \ben
 \psi_{N-k}=(\alpha-{\rm i}\,\beta)\,U_k(x)\,,
 \ \ \ \ k = 0, 1, \ldots, J\,.
 \een
What remains to be satisfied are the two middle lines of
Schr\"{o}dinger equation~(\ref{sematr}),
 \ben
 -(\alpha+{\rm i}\,\beta)\,U_{J-1}(x)
 +[(2x-{\rm i}w)\,(\alpha+{\rm i}\,\beta)-(\alpha-{\rm i}\,\beta)]\,
 U_{J}(x)=0\,,
 \een
 \ben
 -(\alpha-{\rm i}\,\beta)\,U_{J-1}(x)
 +[(2x+{\rm i}w)\,(\alpha-{\rm i}\,\beta)-(\alpha+{\rm i}\,\beta)]\,
 U_{J}(x)=0\,.
 \een
The separation of the
real and imaginary components
yields
 \ben
 -\alpha\,U_{J-1}(x)
 +(2x\alpha +w\,\beta)-\alpha)\,
 U_{J}(x)=0\,
 \een
and
 \ben
 -\beta\,U_{J-1}(x)
 +(2x\beta-w\,\alpha+\beta)\,
 U_{J}(x)=0\,.
 \een
After a premultiplication by suitable constants,
the sum of the latter two relations yields
 \be
 -2\,\alpha\,\beta\,U_{J-1}(x)+[4\,\alpha\,\beta\,x
 +(\beta^2-\alpha^2)\,w]\,U_{J}(x)=0
 \label{former}
 \ee
while their difference only leads to elementary relation
 \be
 (\alpha^2+\beta^2)\,w=2\,\alpha\,\beta\,.
 \ee
This enables us to reparametrize $\alpha=\alpha(\tau)=\cos \tau$ and
$\beta=\beta(\tau)=\sin \tau$ and to deduce that $w=w(\tau)=\sin
2\tau$.

One can treat the auxiliary angle $\tau$ as an
alternative dynamical-input information about the strength of the
non-Hermiticity.
We are now only left with the secular equation (\ref{former}), i.e.,
 \be
 2\,\alpha\,\beta\,U_{J+1}(x)
 +(\beta^2-\alpha^2)\,w\,\,U_{J}(x)=0\,.
 \ee
The insertion of $w$ reduces it to the relation
 \be
 U_{J+1}(x)=
 [\alpha^2(\tau)-\beta^2(\tau)]\,U_{J}(x)\,.
 \label{secueq}
 \ee
This is our ultimate implicit definition of the spectrum of the
energies $h^2E=-2x$ at arbitrary matrix dimension $N=2J+2$.

At the ${\cal PT}-$symmetry-breakdown
boundaries with $w=w^{(EP)}=\pm 1$ or $\tau=\tau^{(EP)}=\pm \pi/4$,
we have $\alpha^2(\tau^{(EP)})=\beta^2(\tau^{(EP)})$ so that
the EPN-related energy values coincide with the
roots of a single polynomial,
 \be
 U_{J+1}\left (x^{(EP)}\right )=0\,.
 \label{facN}
 \ee
These roots
can be given an elementary form given by
formula (\ref{forka}).

The availability of such an explicit
parameter-dependence of the spectrum
in the EPN limit
can be extended to cover also, in an approximative form, a
small vicinity of the singularity.
In this vicinity the difference
$\alpha^2(\tau)-\beta^2(\tau)$
entering Eq.~(\ref{secueq}) will be a small number. The
well known intertwining property of the roots of
the polynomials $U_{J+1}(x)$ and $U_{J}(x)$
will then immediately imply
the correct qualitative understanding of the branching of the levels
at $|w| \lessapprox  1$ as sampled, at $N=10$, in Fig.~\ref{lo6ja3}.

%

\begin{figure}[h]                    
\begin{center}                         
\epsfig{file=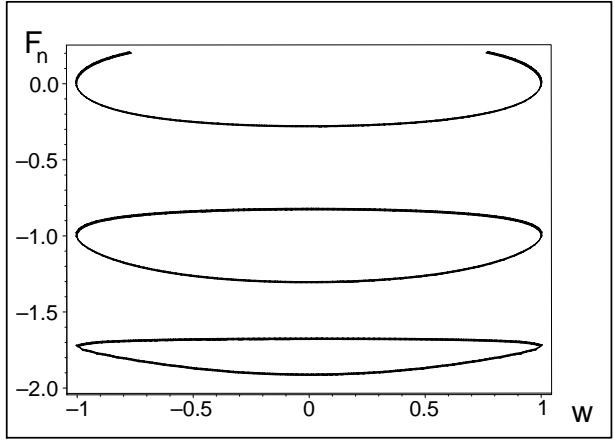,angle=270,width=0.460\textwidth}
\end{center}    
\vspace{2mm} \caption{The
left-right symmetry of the
low-lying spectrum of model (\ref{mytoy})
at $N=10$. The spectral locus is
also symmetric
with respect to the $F=0$
axis, so we did not need to display the
upper, high-excitation half of the spectrum.
 \label{lo6ja3}
 }
\end{figure}

%
%
%

\subsection{Spectral curves at $N=10$.}

The interval  $w \in (-1,1)$
of the unitarity-compatible
physical parameters
is
the same for {\em
all\,} of the bound states
at an {\em
arbitrary\,} even matrix dimension $N=2J+2$.
Whenever the value of $w$ leaves this interval, {\em
all\,} of the energies cease to be real so that abruptly,
the {\em whole\,} spectrum
becomes unobservable.
At all of the
lattice-size-determining integers $J$, even
the parameter-dependence of the
numerically evaluated spectra remains
qualitatively the same,
characterized by the specific {\em pairwise\,} degeneracies of {\em
all\,} of the energy levels at $w=1$ and $w=-1$.

At $N=10$ the  model still remains non-numerical.
Its secular polynomial $P^{(10)}({F},{w})$ is a polynomial
of the fifth degree in the energy-representing variable $F^2$,
 \be
  {{\it {F}}}^{10}+ \left( -9+{{\it {w}}}^{2} \right) {{\it {F}}}^{8}+
 \left( 28-6\,{{\it {w}}}^{2} \right) {{\it {F}}}^{6}+ \left( -35+11\,{
{\it {w}}}^{2} \right) {{\it {F}}}^{4}+ \left( 15-6\,{{\it {w}}}^{2}
 \right) {{\it {F}}}^{2}-1+{{\it {w}}}^{2}\,
 \label{desit}
 \ee
but
this does not imply that the search for its roots
is complicated.
Their brute-force numerical localization is not necessary.
It is sufficient to notice that
the secular polynomial
is just a linear function of the square of the coupling
constant $w^2$. This implies that the spectrum
can be constructed,
in the implicit-function form, non-numerically.

The shape
and symmetry of the spectrum are sampled in Fig.~\ref{lo6ja3}.
The picture just
reconfirms the existence of the strictly two Kato's
exceptional points $w =w^{(EP)}_\pm = \pm 1$.
One can visualize the
coupling $w$
as an
{\em elementary\,}  function
$$
 w=w_\pm(F) = \pm\, [\left( {{F}}^{2}-1 \right)^2  - {{F}}^{2} ]^{-1}\,
{\sqrt {1-{{F}}^{10}+9\,{{F}}^{8}-28\,{{F}}^{6}+35\,{{F}}^{4}-15\,{{F}}^{
2}}}\,
$$
of the energy.
At both of the EPN extremes with $|w|=1$
the $N=10$ secular polynomial
can be factorized,
 \be
 P^{(10)}({F},\pm 1)={F}^2\,({F}^2-1)^2\,({F}^2-3)^2\,.
 \label{fac10}
 \ee
The function $w=w(F)$ can be also Taylor-expanded.
Near $F=0$ this yields the symmetric
and ``deeper-than-quadratic'' well,
 $$
 w({F}) \approx -1+{\frac {9}{2}}{{F}}^{2}
 +{\frac {201}{8}}{{F}}^{4}+O \left( {{F}}^{5}
 \right) \,.
 $$
Similarly,
off the origin we get, in agreement with the picture,
the two narrower and asymmetric wells which are
steeper than the one near the origin. Thus, we get
 $$
 w({F}) \approx -1+8\, \left( {F}-1 \right) ^{2}-24\, \left( {F}-1 \right) ^{3}+122\,
 \left( {F}-1 \right) ^{4}+O \left(  \left( {F}-1 \right) ^{5} \right) \,,
 $$
etc. Finally, the outer wells have just the more pronounced shapes of the same form,
with
 $$
 w({F}) \approx
 -1 +
 72\, \left( {F}- \sqrt{3} \right) ^{2}- 648\,\sqrt {3}\,
 \left( {F}- \sqrt{3} \right) ^{3}+ 16782\, \left( {F}-
 \sqrt{3} \right) ^{4}+O \left(  \left( {F}- \sqrt{3} \right) ^{5}
 \right) \,
 $$
etc. All of these observations
gave birth to their generalizations valid in any analogous
EP-supporting $N-$level quantum system with arbitrary finite $N<\infty$.


\section{Discussion}

From the purely methodical point of view
the choice of the discrete local-interaction
model of Eq.~(\ref{sematr})
has its weaknesses. Firstly,
its variable parameters only
lie on the main diagonal.
This lowers the flexibility of dynamics leading, typically,
just to the
EP2 energy mergers.
Secondly,
additional
antilinear symmetries [sampled here by
${\cal PT}-$symmetry of
Eq.~(\ref{petes})]
had to be imposed
in order to guarantee the reality of the spectrum.
Thirdly,
the well known numerically ill-conditioned nature
of the study of the limiting transition $g \to g^{(EP)}$
often forces us to
use certain truly sophisticated construction methods
in a way sampled, say, in Ref.~\cite{five}.

For all of these reasons it will make sense to
turn attention to the
more general matrix
models in which
the practical calculations remain feasible
but in which
it should still be possible
to enhance
the flexibility of the picture of the EP-related dynamics.
Let us now mention a few hints for the future
projects oriented in this direction.

\subsection{Parallels between harmonic oscillator
and our $N<\infty$ models.}

The turn of attention to the
more general classes of models
might
open
new ways towards an immediate
further development of the theory itself.
In order to be more specific
let us
recall, once more, the
harmonic oscillator results as sampled in
Fig.~\ref{pictho} and in Eq.~(\ref{dekq}) above.
In place of the
canonical Jordan-block limit
of Eq.~(\ref{hisset})\, one obtains, for them,
an alternative,
infinite-dimensional but partitioned Jordan-block limit
 \be
 \lim_{g \to g^{(EP)}}\,H^{(HO)}(g)
 \sim
 \left(
 \begin{array}{cc|cc|cc}
 2&1&0&0&0&\ldots\\
 0&2&0&0&0&\ldots\\
 \hline
 0&0&6&1&0&\ldots\\
 0&0&0&6&0&\ldots\\
 \hline
 0&0&0&0&10&\ldots\\
 \vdots&\vdots&\vdots&\vdots&\ddots&\ddots
 \ea
 \right ) \,
 \label{onosset}
 \ee
of the form of
Eq.~(\ref{mafo}) with infinite
sequence of the
energy
mergers available in closed form, $\eta_j=4j-2$ \cite{scirep}.

The EP singularities of
our present $N < \infty$ models (\ref{latti})
lead to
an analogous canonical-representation limit
with the known values of $\eta_j$. In particular,
for our $N=10$ model (\ref{mytoy}) characterized by the
secular polynomial of Eq.~(\ref{desit}) and by
the canonical form (\ref{mafo}) of the EP10 limit
with $K=5$, it would be easy to construct the so called
transition matrices $Q^{{(10)}}_{}$ and to evaluate
the
canonical-representation form
of the Hamiltonian,
 \be
 H^{{(canonical)}}_{}(w)\,
 =[Q^{{(10)}}_{}]^{-1} \,H^{{(10)}}(w)\,.
  Q^{{(10)}}_{}
 \,
 \label{Crealt}
 \ee
In the EP limit we would get
 \be
 \lim_{w \to w^{(EP)}}\,H^{(10)}(w)
 \sim
 H^{{(canonical)}}_{}(1)=
 \left(
 \begin{array}{cc|cc|cc|cc|cc}
 -\sqrt{3}&1&&&&&&&&\\
 0&-\sqrt{3}&&&&&&&&\\
 \hline
 &&-1&1&&&&&&\\
 &&0&-1&&&&&&\\
 \hline
 &&&&0&1&&&&\\
 &&&&0&0&&&&\\
 \hline
 &&&&&&1&1&&\\
 &&&&&&0&1&&\\
 \hline
 &&&&&&&&\sqrt{3}&1\\
 &&&&&&&&0&\sqrt{3}\\
 \ea
 \right ) \,.
 \label{eosset}
 \ee
Such a canonical-Hamiltonian matrix
is block-diagonal. At a fixed
algebraic multiplicity of EPN (i.e., at $N=10$ in this case)
the number $K$ of its
independent eigenvectors (called the geometric multiplicity of EPN)
is maximal (here, we have $K=5$).


\subsection{Asymmetric real-matrix models.}

One of the
next-step model-building strategies
could be inspired by the less explored
non-numerical constructions of Refs.~\cite{tridiagonal,trib}.
The
necessary simplification of the technicalities
has been achieved there by the reduction of the class of
the eligible
Hamiltonians to the mere tridiagonal real
and real-parameter-dependent asymmetric matrices
admitting off-diagonal interaction terms.
In contrast to our preceding models,
the
weakly non-local interactions of such a type
proved useful, e.g., in
the pseudo-Hermitian models of scattering
\cite{Jones,scattb,scattc,scattj,scattf,Kuzh}.

For an illustration of their specific merits let us
recall now the six-by-six-dimensional special case of
the $N$ by $N$ matrices of Ref.~\cite{tridiagonal},
and let us
complement it by an ${\cal O}(g)$
perturbation. This leads to
one of the most
user-friendly real-matrix two-parametric Hamiltonians, viz.,
 \be
  H^{(toy)}(g,\lambda)=\left[ \begin {array}{cccccc}
 -5+g &\sqrt {5+5\,{\it \lambda}}&0&0&0&0
 \\\noalign{\medskip}-\sqrt {5+5\,{\it \lambda}}&-3&2\,\sqrt {2+2\,{\it \lambda
 }}&0&0&0
 \\\noalign{\medskip}0&-2\,\sqrt {2+2\,{\it \lambda}}&-1&3\,\sqrt {
 1+{\it \lambda}}&0&0
\\\noalign{\medskip}0&0&-3\,\sqrt {1+{\it \lambda}}&1&2\,
  \sqrt {2+2\,{\it \lambda}}&0
 \\\noalign{\medskip}0&0&0&-2\,\sqrt {2+2\,{
 \it \lambda}}&3&\sqrt {5+5\,{\it \lambda}}
 \\\noalign{\medskip}0&0&0&0&-\sqrt {
 5+5\,{\it \lambda}}&5-g\end {array} \right]\,.
 \label{ha6}
 \ee
The real, non-degenerate and
equidistant spectrum is obtained
in a $g-$dependent unitarity-compatible
interval ${\cal D}^{(toy)}(g)$ of parameters $\lambda$.
The simplest
proof becomes available
in the unperturbed case with $g=0$. One merely has
to recall the closed formulae of Ref.~\cite{tridiagonal} yielding
the admissibility interval of $\lambda \in (-\infty,0)$
or, in the real-matrix case,
the narrower range of $\lambda \in (-1,0)$.

\subsection{The parameter-controlled
change of the geometric multiplicity.}

 \noindent
In model (\ref{ha6})
with small $g>0$ we may
omit, as trivial, the
half-line of parameters $\lambda < -1$
at which the matrix becomes complex but Hermitian.
We are left with the variability of the single unitarity-supporting
parameter $\lambda \in
(-1,\mu(g))={\cal
D}^{(toy)}(g)$ with $\mu(g) \leq 0$, and we notice that the
parameter-dependence of the
spectrum is entirely different from
our preceding models.
At the left boundary
$\lambda=-1$ the matrix $H^{(toy)}(g,-1)$ becomes diagonal, i.e.,
Hermitian and
tractable as a truncated conventional harmonic oscillator with
equidistant spectrum.
In contrast, the spectral pattern
is very different at the right boundary of ${\cal
D}^{(toy)}$: See
Fig.~\ref{glouqa}
for illustration.

\begin{figure}[h]                    
\begin{center}                         
\epsfig{file=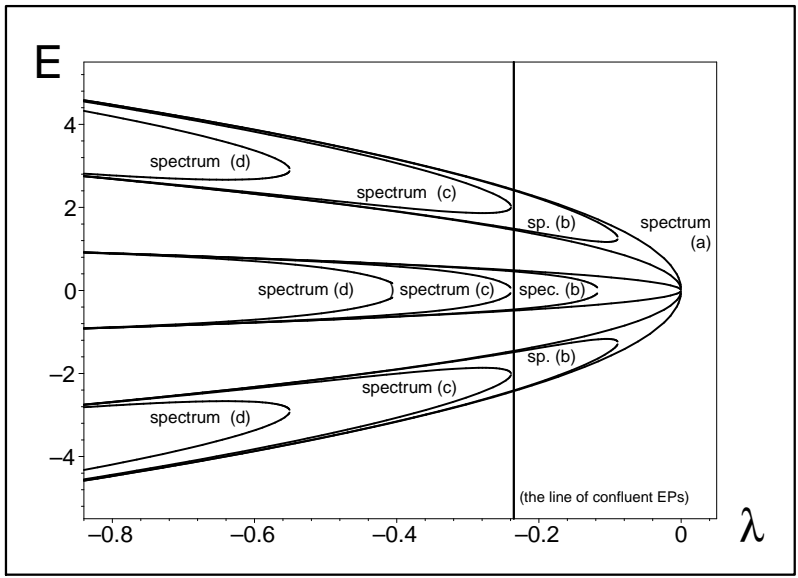,angle=270,width=0.6\textwidth}
\end{center}    
\caption{The $\lambda-$dependent spectra of Hamiltonian (\ref{ha6})
at $g=0$ [spectrum (a)],  $g=1/500$ [spectrum (b)], $g=1/40$
[spectrum (c)] and $g=1/5$ [spectrum (d)].
 \label{glouqa}}
\end{figure}

The comparatively elementary nature of the model
facilitates a
detailed interpretation of the
shapes of the spectra. At the smallest $g$s the value
of the upper bound $\mu(g)$ is
determined by the central energy merger representing an EP of order
two (EP2). We may set
$\mu(g)= \lambda^{(EP)}_1$.
In the opposite extreme of a
large shift $g$ one gets
another, different behavior and formula for $\mu(g)= \lambda^{(EP)}_0=
\lambda^{(EP)}_2$.
The change of
the pattern clearly reflects the fragility of the off-cental
states exhibiting, at larger $g$s,
the
confluence of their EP2 singularities
(in the notation of Ref.~\cite{without} one would write EP=EP2$\oplus$EP2).

The main {\em qualitative\,} difference from the dynamics of
the ``local''
models (\ref{latti}) can be now formulated as the following observation.

\begin{conj}
In model (\ref{ha6})
the separate generic
$g-$dependent
EP2 coordinates $\lambda=\lambda^{(EP2)}_j({g})$
of the pairwise mergers
of the neighboring energies $E_{2j}^{(10)}({\lambda},g)$
and $E_{2j+1}^{(10)}({\varrho},g)$
with $j=0, 1,2$
are all strictly decreasing functions of ${g} \in (0,1/5)$ such that
 \be
 \lambda^{(EP2)}_1({g})<
 \lambda^{(EP2)}_0({g})=
 \lambda^{(EP2)}_2({g})<0\,
 \ee
near the origin (i.e., for $g\ll 1/40$), and such that
 \be
 0>
 \lambda^{(EP2)}_1({g})>
 \lambda^{(EP2)}_0({g})=
 \lambda^{(EP2)}_2({g})\,
 \ee
at the larger $g \gg 1/40$.
\end{conj}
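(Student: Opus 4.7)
The strategy is to exploit the anti-parity symmetry of $H^{(toy)}(g,\lambda)$ to reduce the six-dimensional spectral problem to a cubic and then analyse the two remaining EP conditions explicitly. A direct entry-by-entry inspection shows that if $\mathcal{P}$ denotes the $6\times 6$ antidiagonal unit matrix, then
\be
\mathcal{P}\,H^{(toy)}(g,\lambda)\,\mathcal{P} = -H^{(toy)}(g,\lambda)
\label{antisym}
\ee
for all $(g,\lambda)$: on the diagonal the pairs $(-5+g,5-g)$, $(-3,3)$, $(-1,1)$ flip sign; on the super- and sub-diagonals the palindromic off-diagonal entries flip sign because the signs of the $\sqrt{\,\cdot\,}$-entries in the upper and lower triangles are already opposite. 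Consequently the spectrum is symmetric about $0$, and the characteristic polynomial factorises as $\chi(E;g,\lambda)=Q(E^{2};g,\lambda)$ with $Q$ a monic cubic in $y=E^{2}$. The equality $\lambda^{(EP2)}_{0}(g)=\lambda^{(EP2)}_{2}(g)$ is then automatic for every $g$, since any collision of two positive roots of $Q(y)$ produces a simultaneous merger of $(E_{0},E_{1})$ and $(E_{4},E_{5})=(-E_{1},-E_{0})$.

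The remaining two EP curves admit simple polynomial characterizations. Write $Q(y;g,\lambda)=y^{3}+a_{2}(g,\lambda)\,y^{2}+a_{1}(g,\lambda)\,y+a_{0}(g,\lambda)$, where the coefficients are obtained by a direct tridiagonal expansion of $\det(E\cdot I-H^{(toy)})$. The central EP2 curve $\lambda^{(EP2)}_{1}(g)$ is the branch of the equation $a_{0}(g,\lambda)=-\det H^{(toy)}(g,\lambda)=0$ passing through the unperturbed point guaranteed by the cited result of Ref.~\cite{tridiagonal}. The off-central EP2 curve $\lambda^{(EP2)}_{0}(g)$ is the branch of the cubic discriminant $\Delta_{y}Q(g,\lambda)=a_{1}^{2}a_{2}^{2}-4a_{1}^{3}-4a_{0}a_{2}^{3}+18a_{0}a_{1}a_{2}-27a_{0}^{2}=0$ that carries a \emph{positive} double root of $Q$; the spurious branch with a double root at $y<0$ (unphysical, purely imaginary energies) must be discarded.

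Strict monotonicity in $g$ on $(0,1/5)$ is then an implicit-function-theorem computation on each of the two smooth algebraic branches. Differentiating the defining equations implicitly gives
\be
\frac{d\lambda^{(EP2)}_{1}}{dg}=-\frac{\partial_{g}a_{0}}{\partial_{\lambda}a_{0}}\,,
\qquad
\frac{d\lambda^{(EP2)}_{0}}{dg}=-\frac{\partial_{g}\Delta_{y}Q}{\partial_{\lambda}\Delta_{y}Q}\,,
\ee
and one needs to sign these ratios along the respective branches. Because the $g$-dependence of $H^{(toy)}$ sits only at the $(1,1)$ and $(6,6)$ entries as $g\,\mathrm{diag}(1,0,0,0,0,-1)$, the partials with respect to $g$ have a transparent cofactor expression; standard sign-analysis of the resulting polynomials (using Sturm sequences or sampling combined with the absence of common zeros verified by a resultant) will close this step.

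Finally, the ordering reversal is located by the condition that the two curves meet: $\lambda^{(EP2)}_{1}(g^{*})=\lambda^{(EP2)}_{0}(g^{*})$ forces a coincidence of a root of $Q$ at $y=0$ with a double root of $Q$, hence $a_{0}(g^{*},\lambda^{*})=a_{1}(g^{*},\lambda^{*})=0$. Eliminating $\lambda^{*}$ via resultants reduces this to a single polynomial in $g$, whose unique positive root on $(0,1/5)$ is $g^{*}\approx 1/40$; comparing signs of $\lambda^{(EP2)}_{1}-\lambda^{(EP2)}_{0}$ at $g$ slightly below and above $g^{*}$ (via a one-term expansion of the implicit solutions) then yields the two asymptotic inequalities in the conjecture. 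The main obstacle is the bookkeeping for $\Delta_{y}Q$: it is a polynomial of high combined degree in $(g,\lambda)$, and isolating the correct ``physical'' branch and controlling the sign of $\partial_{\lambda}\Delta_{y}Q$ along it — in particular ruling out additional critical points of $\lambda^{(EP2)}_{0}(g)$ inside $(0,1/5)$ — will be the delicate step, but it is purely a finite symbolic-computation problem.
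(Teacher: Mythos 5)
First, note that the paper itself does not prove this statement: it is presented as a Conjecture supported only by the numerically computed spectra of Fig.~\ref{glouqa}, so your attempt has to be judged on its own internal logic rather than against a proof in the text. Your opening observation is correct and genuinely valuable: with ${\cal P}$ the $6\times 6$ antidiagonal unit matrix one indeed has ${\cal P}\,H^{(toy)}(g,\lambda)\,{\cal P}=-H^{(toy)}(g,\lambda)$, so the characteristic polynomial is even in $E$, and the equality $\lambda^{(EP2)}_0(g)=\lambda^{(EP2)}_2(g)$ follows exactly for every $g$ because the collisions $E_0=E_1$ and $E_5=-E_0=-E_1=E_4$ are mirror images of one another. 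This rigorously establishes a part of the statement that the paper only infers from pictures, and the reduction to a cubic $Q(y)$ in $y=E^2$ (with the central merger at $E=0$ governed by the vanishing of the constant term $a_0$ and the off-central one by the discriminant of $Q$) is the right framework.

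Two gaps remain. The substantive one is your localization of the crossing: the coincidence $\lambda^{(EP2)}_1(g^{*})=\lambda^{(EP2)}_0(g^{*})$ means only that the two defining loci pass through the same point of the $(g,\lambda)$ plane, i.e., that $Q$ simultaneously has a simple root at $y=0$ and a double root at some $y_{*}>0$; it does not force $y_{*}=0$, so your system $a_0=a_1=0$ is the wrong one. Indeed $a_0=a_1=0$ would mean a fourfold degeneracy of the characteristic polynomial at $E=0$, whereas the paper's subsequent Proposition describes the $g\approx 1/40$ point as an EP6 of geometric multiplicity $K=3$ with three separate merger energies $\eta_j$. After factoring $Q(y)=y\,(y^2+a_2y+a_1)$ on the locus $a_0=0$, the correct crossing condition is $a_0=0$ together with $a_2^2=4a_1$ and $-a_2/2>0$; eliminating $\lambda$ from your system would locate a different (or nonexistent) point. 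The second gap is that the monotonicity in $g$, the branch selection for the discriminant locus (which also contains the unwanted $E_1=E_2$--type collisions), and the sign comparison on either side of $g^{*}$ are only announced as ``finite symbolic computations'' (Sturm sequences, resultants) and never carried out; since these are precisely the assertions that make the conjecture nontrivial, the proposal as written is a correct and useful reduction plus a plan, not a proof.
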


 \noindent
From this bracketing feature one can immediately deduce
the following obvious result which we will give here
without a formal proof.

\begin{prop}
In model (\ref{ha6}) there exists
an EP6=EP2$\oplus$EP2$\oplus$EP2 singularity
$g^{(EP6)} \in (0,1/5)$ with the geometric multiplicity $K=3$.
\end{prop}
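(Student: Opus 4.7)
The plan is to combine the previous Conjecture with a continuity/Intermediate Value argument and then identify the canonical Jordan structure at the confluence. First I would note that, by standard perturbation theory for the characteristic polynomial, each branch $g\mapsto \lambda^{(EP2)}_j(g)$ is continuous on the interval $(0,1/5)$. The Conjecture asserts $\lambda^{(EP2)}_1(g) < \lambda^{(EP2)}_0(g) = \lambda^{(EP2)}_2(g)$ for small $g$ (e.g.\ $g\ll 1/40$) and, conversely, $\lambda^{(EP2)}_1(g) > \lambda^{(EP2)}_0(g) = \lambda^{(EP2)}_2(g)$ for larger $g \gg 1/40$ in $(0,1/5)$, with the coincidence of the two outer branches forced throughout by the antidiagonal left--right symmetry of $H^{(toy)}(g,\lambda)$. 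The continuous function $f(g)=\lambda^{(EP2)}_1(g)-\lambda^{(EP2)}_0(g)$ therefore changes sign on $(0,1/5)$, so by the Intermediate Value Theorem there exists $g^{(EP6)}\in(0,1/5)$ at which the three EP2 branches coincide at a common parameter value $\lambda^{(EP6)}$.

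Next I would identify the Jordan structure of the non-diagonalizable matrix $H^{(toy)}(g^{(EP6)},\lambda^{(EP6)})$. The three confluent EP2 branches correspond to the pairwise mergers of three disjoint pairs of neighboring energies, namely $\{E_0,E_1\}$, $\{E_2,E_3\}$ and $\{E_4,E_5\}$. By the $\mathcal{P}$-symmetry the spectrum stays symmetric about zero, so at the confluence the three merged eigenvalues are of the form $\eta_0$, $\eta_1$ and $-\eta_0$ with $\eta_0\neq\eta_1$ (the central merger stays near $E=0$ while the outer mergers remain bounded away from it, as visible in Fig.~\ref{glouqa}). Because the three merged eigenvalues are mutually distinct, no Jordan coupling between the three pairs is possible and the canonical form is forced to be the direct sum
\[
 H^{(toy)}(g^{(EP6)},\lambda^{(EP6)})\ \sim\ J^{(2)}(\eta_0)\oplus J^{(2)}(\eta_1)\oplus J^{(2)}(-\eta_0)\,,
\]
so that the algebraic multiplicity is $6$ while the geometric multiplicity, being the number of independent eigenvectors (one per Jordan block), is $K=3$ as claimed.

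The hard part is controlling the finer Jordan data at the confluence, because the IVT only delivers coincidence of the three EP2 \emph{parameter} values and does not by itself rule out further degenerations at the level of eigenvalues or generalized eigenvectors. I would close this gap by establishing two separations that hold near (and hence, by continuity, at) $g^{(EP6)}$: (i) the merged energies $\eta_0(g)$ and $\eta_1(g)$ remain strictly distinct throughout $(0,1/5)$, placing the three Jordan blocks in three different generalized eigenspaces; and (ii) each of the three branches is a genuine EP2, i.e.\ a simple square-root branch point of the characteristic polynomial rather than a higher-order singularity. Both facts can be extracted from the analytic dependence of the roots of the secular polynomial of $H^{(toy)}(g,\lambda)$ together with the numerical evidence of Fig.~\ref{glouqa}, but a fully rigorous verification would require an explicit analysis of the discriminant of this polynomial viewed as a bivariate function of $(g,\lambda)$ on the relevant parameter window; for this reason the statement is quite naturally phrased as an observation to be given without a formal proof.
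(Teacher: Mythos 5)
Your argument follows exactly the route the paper intends: the proposition is stated there explicitly ``without a formal proof,'' being deduced ``immediately'' from the bracketing feature of the preceding conjecture, i.e., precisely the intermediate-value crossing of $\lambda^{(EP2)}_1(g)$ with $\lambda^{(EP2)}_0(g)=\lambda^{(EP2)}_2(g)$ on $(0,1/5)$ that you invoke, supplemented only by the numerical estimate $g^{(EP6)}\approx 1/40$. Your additional care about the Jordan structure at the confluence (distinctness of the merged energies $\eta_j$ forcing three separate $J^{(2)}$ blocks and hence $K=3$) and your honest flagging of what the IVT argument does not control go beyond what the paper writes down, but the core approach is the same.
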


 \noindent
The approximate numerical estimate of $g^{(EP6)} \approx 1/40$
has been used to display the
corresponding $\lambda-$dependence of the
energies in Fig.~\ref{glouqa} [marked as ``spectrum (c)''].
Obviously, the local boundary of the physical interval
${\cal D}^{(toy)}$
is given by the
function $\lambda^{(EP2)}_1({g})$ at $g<g^{(EP6)}$, and by the
function $\lambda^{(EP2)}_0({g})=\lambda^{(EP2)}_2({g})$
at $g>g^{(EP6)}$.

We may conclude that our toy model (\ref{ha6})
admits a smooth transition from the
dynamical
regime with the
minimal geometric multiplicity
$K=1$ of the EP6 at $g=g^{(EP6)}_{(K=1)} =0$
to its maximal-geometric-multiplicity alternative
with $K=3$ at $g=g^{(EP6)}_{(K=3)} \approx 1/40$.
In the former case all of
the bound-state energies
converge, in a way
prescribed by Eq.~(\ref{conflen}), to the single EP
value $\eta=E^{(EPN)}$ with $N=6$
at
$\lambda^{(EP)}=\lambda^{(EP)}(g)=\lambda^{(EP)}(0)$,
 \be
 \lim_{{\lambda}\to {\lambda}^{(EP)}}
 E_{j}({\lambda})=E^{(EP6)}\,,\ \ \ \ j=0,1,\ldots,5\,.
 \label{dedekq}
 \ee
In \cite{tridiagonal},
via solvable tridiagonal real-matrix models we managed to simulate
such a
minimal geometric multiplicity behavior of the energies
for an arbitrary
preselected finite Hilbert-space dimension $N < \infty$.
Using a brute-force numerical search
such a type of construction
with minimal $K=1$ remains feasible even in the models which are
realistic \cite{bhgen}.

In the opposite extreme of the dynamical scenario near
EPN=EP2$\oplus$EP2$\oplus \ldots \oplus$EP2 with
the even algebraic EP multiplicity $N=2K$
(such that $K$
now
represents the maximal geometric multiplicity)
the energy degeneracy
is ``maximally incomplete'', having
proceeded merely pairwise,
 \be
 \lim_{{\lambda}\to {\lambda}^{(EP)}}
 E_{n_j}({\lambda})=\eta_j\,,\ \ \ \ j=1,2,\ldots,K\,,
 \ \ \ \
 n_1=0,1\,,\ \
 n_2=2,3\,,\ \ \ldots\,\ \
 n_K=2K-2,2K-1\,.
 \label{ekq}
 \ee
For our model (\ref{ha6})
we just have to insert $K=3$ and
specify
$\lambda^{(EP)}
=\lambda^{(EP)}\left(g^{(EP6)}_{(K=3)}\right)$.

Along similar lines one can
simulate the genuine quantum phase
transition phenomena with an optional geometric multiplicity $K$.
The first applications of such an approach may already be found
in the elementary methodical toy models~\cite{pre},
with the next stage of developments to be aimed at the topical
realistic applications
of the theory,
say, in the descriptions of the
mechanism of the Bose-Einstein condensation
using
the
multi-bosonic pseudo-Hermitian
Bose-Hubbard Hamiltonians \cite{without,preUwe,zaUwe,zaUweb}.

Multiple related mathematical questions remain open.
Nevertheless,
using the standard
Kato's
terminology \cite{Kato},
we certainly will have to distinguish,
at a fixed algebraic EPN multiplicity $N$,
between the occurrence of
a minimal geometric multiplicity $K=1$
[leading to the canonical-representation limit of
Eq.~(\ref{hisset})], of
a maximal geometric multiplicity $K=N/2$
[yielding the alternative canonical-representation limit of
Eq.~(\ref{mafo})],
and of all of the other possibilities in between
these two extremes. This leads us to our final
methodical conclusion.

\begin{prop}
Any given EPN-supporting quantum closed-system Hamiltonian
may be characterized, in its EPN limit,  by its canonical
$N$ by $N$ matrix form
$H^{{(canonical)}}_{}$
with the most general direct-sum
{\it alias\,} block-diagonal-matrix structure
 \be
 H^{{(canonical)}}_{}=
 \bigoplus_{j=1}^K\,
 J^{(N_j)}(\eta_j)\,,
 \ \ \ \ N_1+N_2+\ldots N_K=N\,
 \label{afora}
 \ee
containing nontrivial partitions $N_j \geq 2$.
\end{prop}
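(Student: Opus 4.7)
The plan is to reduce the statement to the Jordan canonical form theorem of linear algebra, applied to the well-defined limiting matrix $H_0 := \lim_{g \to g^{(EPN)}} H(g)$ in $\mathbb{C}^{N \times N}$. Since $H(g)$ is assumed to depend analytically (or at least continuously) on the parameter $g$, this pointwise limit exists as a finite matrix, even though the similarity transformation that diagonalizes $H(g)$ for $g\neq g^{(EPN)}$ becomes singular in the limit.

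First I would invoke the Jordan theorem: there exists an invertible $Q$ such that $Q^{-1} H_0 Q = \bigoplus_{j=1}^{K} J^{(N_j)}(\eta_j)$, with the blocks $J^{(N_j)}(\eta_j)$ as defined in Eq.~(\ref{hisset}), with $K$ equal to the number of independent eigenvectors of $H_0$ (the geometric multiplicity), and with $N_1 + \ldots + N_K = N$. This identifies the paper's ``algebraic multiplicity'' with $N$ and its ``geometric multiplicity'' with $K$, consistently with the explicit illustrations in Eqs.~(\ref{mafo}) and (\ref{eosset}). The eigenvalues $\eta_j$ are then precisely the distinct limiting values $\lim_{g\to g^{(EPN)}} E_{n}(g)$ at which clusters of levels coalesce.

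Second, I would establish $N_j \geq 2$ for every $j$ from the EPN-supporting hypothesis. If some block had size $N_j = 1$, the corresponding eigenvalue $\eta_j$ would be a regular (diagonalizable) eigenvalue of $H_0$, carrying an ordinary eigenvector that survives the limit $g \to g^{(EPN)}$ without any breakdown of analyticity of the associated spectral projector. Such a level would not participate in any EP-type coalescence, contradicting the standing hypothesis (explicitly realized in Propositions \ref{propa} and \ref{propb}) that all $N$ levels merge at the confluent exceptional point. Hence every $N_j \geq 2$, and the decomposition (\ref{afora}) follows.

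The main obstacle is conceptual rather than computational: one must fix precisely what ``EPN-supporting'' is to mean, so that the exclusion $N_j = 1$ is unambiguous. The convention that fits the rest of the paper is that the EPN in question is the full confluence of all $N$ energy levels, as in section~\ref{sese}. Under a weaker reading, in which only some levels coalesce while others remain simple, trivial $1\times 1$ blocks would reappear as spectator summands and the statement would have to be restricted to the non-trivial invariant subspace carrying the singular part of the resolvent. Once this convention is in place, the proposition becomes an immediate corollary of the Jordan theorem, and no further calculation is required; the substantive content of the paper lies in exhibiting, via the toy models above, that every admissible partition $(N_1,\ldots,N_K)$ with $N_j\geq 2$ can in fact be realized by a genuine closed-system Hamiltonian.
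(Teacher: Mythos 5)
The paper offers no proof of this proposition at all: it is presented as a ``methodical conclusion'' and is, in substance, an appeal to the Jordan canonical form together with a convention about what the algebraic multiplicity $N$ of an EPN is supposed to count. Your reduction to the Jordan theorem applied to the limiting matrix $H_0$ is therefore the right (and essentially the only) skeleton, and your identification of $K$ with the number of independent eigenvectors is consistent with Eqs.~(\ref{mafo}) and (\ref{eosset}).

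Your second step, however, has a genuine gap: the exclusion $N_j\geq 2$ does not follow from the hypothesis that all $N$ levels coalesce. A $1\times 1$ Jordan block does not mean that the corresponding level ``would not participate in any EP-type coalescence.'' Take $N=3$ and a family whose eigenvalues are $\sqrt{g}$, $-\sqrt{g}$ and $\sqrt{g}$, with limiting matrix $J^{(2)}(0)\oplus J^{(1)}(0)$: all three levels merge at $g=0$, the limit is non-diagonalizable and the total eigenprojection is singular, yet the Jordan form contains a trivial block. So the dichotomy you invoke in your last paragraph (either all levels coalesce, whence $N_j\geq 2$, or some levels are mere spectators, whence trivial blocks) does not exhaust the possibilities, because a spectator eigenvector can hide inside a fully coalescing eigenvalue. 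The exclusion of $N_j=1$ is not a theorem but part of the definition of ``EPN-supporting'': $N$ must be declared to count only the defective directions, i.e., the proposition classifies the non-semisimple part of $H_0$, and this is plainly the paper's intent given its citation of the partitions of $N$ containing no part equal to $1$. With that convention made explicit, the statement is, as you say, an immediate corollary of the Jordan theorem and the real content of the paper lies in exhibiting models realizing the various partitions.
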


 \noindent
The latter operator EPN limit is fully
characterized by the partitioning of $N$
(check some of its number-theory aspects in \cite{Acc})
and by the $K-$plet of the EPN energies $\eta_j$.
Thus, every classification of phase transitions should refer to
the pair of the multiplicites $N$ (algebraic) and $K$ (geometric).
The above-studied minimal- and maximal$-K$ models
also become reclassified as the two extreme
special cases which are, in some sense, just
most elementary.

\subsection{Outlook.}

The phenomenology-oriented core of our present message is that
one of the most promising innovative
means of the control of unitary
quantum dynamics
should be sought in a
purposeful
manipulation with the Kato's exceptional points
$g^{(EP)}$.

At the first sight such a statement sounds like an oxymoron
because
the unitarity of the evolution
requires, in Schr\"{o}dinger picture,
the self-adjointness of the Hamiltonian,
while such a requirement is
{\em manifestly\,} violated by $H(g)$ at $g=g^{(EP)}$.
For this reason it is necessary to emphasize
the existence of the two tacit assumptions behind the PHQM SP theory.
The first one is that we really {\em exclude\,}
the singularity $g=g^{(EP)}$, and that we only work in
its vicinity ${\cal D}_{(physical)}$,
with the parameter $g$
admitted to lie arbitrarily close to
$g^{(EP)}$ (i.e., formally,
$g^{(EP)} \in \partial {\cal D}_{(physical)}$).

The second tacit assumption
is more standard and means the acceptance
of the currently very popular PHQM
update of quantum theory.
In this framework
the self-adjointness of $H(g)$ is considered
$g-$dependent or, more precisely,
metric-operator-dependent, $\Theta(g)-$dependent.
Precisely due to this freedom,
the desirable
limiting transition
$g\to g^{(EP)}$ can always be performed in
a mathematically consistent and
unitarity-compatible manner.

In the current literature, unfortunately,
the
EP-related field of phenomenology
is predominantly developed in
its applications to the
open (i.e., in other words, manifestly
non-unitary) quantum systems \cite{Nimrod}
and/or to various
non-quantum
or even non-linear
systems \cite{Christodoulides,Carlbook,kon,kono}.
One of the reasons
is that in such a setup
the model-building process is
technically easier, being
still allowed to work
with the trivial choice of $\Theta(g)=I$.

In
this sense, we tried to lower here
the related psychological barriers.
Having emphasized the fundamental aspect
of the strictly unitary theory (requiring,
typically, nontrivial metrics
$\Theta(g)\neq I$),
we illustrated
its user-friendly
nature by a detailed analysis of certain
finite-dimensional $N$ by $N$ matrix benchmark
Hamiltonians $H^{(N)}(g)$.

The validity of our conclusions remains
model-independent of course.
In their brief summary let us emphasize that
one of the key benefits of the PHQM formulation
of the theory lies in its capability of covering
multiple apparently exotic system-evolution
scenarios in which $g$ is not too far from $g^{(EP)}$.
Then,
the metric $\Theta(g)$
becomes very different from the conventional
choice of $\Theta(g)=I$
of textbooks.
Besides an undeniable
phenomenological appeal of the anisotropy
$\Theta(g)\neq I$ the second deep merit of the scenario
lies in the one-to-one correspondence between
the geometry of Hilbert space and the characteristics of
the EP.
In this manner, the behavior of dynamics becomes
directly controlled by
the characteristics
of the EP,
i.e., by its algebraic multiplicity $N$ and
by its geometric multiplicity $K$.
In the vicinity of a given EP,
the latter two
integers
will
characterize the dynamics
in a unified qualitative manner tractable
as a certain quantum
analogue of the
classical Thom's catastrophe theory.

\section{Summary}

In Introduction we formulated our present project as a
transfer of the Thom's classical concept of catastrophes
to quantum theory. We reminded the readers
that the geometric nature of the Thom's theory (in
which the stability of a long-time equilibrium of the system in
question is mimicked and simulated by the local stability of a local
minimum of the so called Lyapunov function $V(x)$) cannot easily be
transferred to quantum mechanics, i.a., due to the phenomenon of
tunneling (see also \cite{catast}).

Now, let us add that there still exist multiple parallels between
the present considerations and the Thom's theory. Indeed, in the
latter case, a classification of classical catastrophes was
achieved via the reduction of arbitrary $V(x)$s to its ``canonical''
form. The resulting bifurcation scenarios were given the intuitively
appealing names (like the ``fold catastrophe'' with ``canonical''
one-parametric $V(x)=x^3+ax$, or the ``cusp catastrophe''  with  the
two-parametric but still one-dimensional $V(x)=x^4+ax^2+bx$, etc
\cite{Zeemanc}).

All this made the classical Thom's theory popular.
On this background we pointed out, in \cite{168}, that many of the
standard Lyapunov functions $V(x)$ could rather easily be
reinterpreted as mimicking certain strictly quantum analogues of the
classical elementary catastrophes. Indeed, once we decided to define
the catastrophes, qualitatively, as the ``sudden shifts in behavior
arising from small changes in circumstances'' \cite{Zeema}, we were
immediately able to reinterpret many (i.e., not all!) Lyapunov
functions $V(x)$ as the ``benchmark'' quantum potentials in
Schr\"{o}dinger Eq.~(\ref{eqsch}) with $H = -\triangle + V(x)$.

The latter idea found its constructive applications even in more
dimensions, with $x \in \mathbb{R}^d$ at nontrivial $d=2$ in
\cite{2d}, or at the more realistic $d=3$ in \cite{3d}.
Nevertheless, the price had to be paid for the strictly shared
locality of the benchmark potentials $V(x)$. This made the fairly
close classical-quantum analogy incomplete and, unfortunately, just
approximative.
Indeed, a key weakness of the approach lied in the nature
of the assignment of a suitable EP parameter $g^{(EP)}$ to the
corresponding quantum catastrophe. The reason was that
in a way motivated by the
above-cited Stone theorem, the Hamiltonians were
chosen self-adjoint. This implied
that  Im$(g^{(EP)}) \neq 0$. Thus, the
unavoidable presence of a small imaginary components in the
parameters made the process of reaching the phase transition
non-unitary. In
other words, the simulation of the quantum
``energy-level-degeneracy'' catastrophe (achieved, in our preceding
sections, due to the hidden Hermiticity of $H$) would require
an analytic continuation. Without such a
modification
of the model, the ``shifts in behavior'' would not be ``sudden'', and
the well known ``avoided level mergers''
would be experimentally observed.
In comparison, our present, EP-related simulation of the
energy-level mergers proved more successful, exact and ``unavoided''
(see also, in this context, the exactly solvable non-Hermitian
differential-operator model in \cite{ptho}).

In any case, a word of
warning must be added. The point is that the domain of the realistic
physics in which one deals with the concept of the quantum phase
transition (for reference, the readers should consult, e.g., the
Sachdev's classical monograph \cite{[a]}) is, naturally, much larger
than its EP-based subdomain as studied and clarified in our present
paper.
In parallel, the source of optimism concerning the future
developments of our present approach
could be sought in the possibility of
a partial return to the
open-system theory.
Indeed, in its more ambitious forms one could use, typically,
Lindblad operators \cite{hata} or Liouvillians
\cite{[m1],[m2],[m3]}. In such a framework, the present
classification of some of the EP-based phase-transition processes could also
be, in a next-step development, included.

Many open question emerge in such an open-system setting at present.
They are mostly connected with the specific, Liouvillean-picture-related
phenomena like quantum jumps \cite{[m1]}, in a way moving beyond the
limitations characterizing various standard quantum master equation
descriptions \cite{[c1]}. In our preceding text we only stressed
that when speaking about Hamiltonians with EPs, one usually deals
with the information about the environment which is incomplete. Now,
let us add that some more sophisticated
open-system Hamiltonians might remain Hermitian and, thus,
fundamental. In the constructions of
this type (see, e.g., \cite{[c2],[c3]})
the consistency of the theory is achieved, via
introduction of the so called Langevin force, in the Heisenberg
picture, i.e., not in the present SP framework.

\section*{Acknowledgements}

The author is grateful to Excellence project P\v{r}F UHK
2211/2022-2023 for the financial support.


\section*{Data Availability}

 No datasets were generated or analysed during the current study.

\section*{Author Contribution statement}

The author is the single author of the paper.

\section*{Competing interests}

The author declares no competing interests.

\newpage

\end{document}